\documentclass[reqno]{amsart}
%
%packages
\usepackage{amsmath,amssymb,amsthm,amsfonts}
\usepackage{color}
\usepackage{mathrsfs}
\usepackage{hyperref}
\usepackage{enumerate}

%%% Environment
\theoremstyle{plain}
\newtheorem{theorem}{Theorem}
\newtheorem{lemma}[theorem]{Lemma}
\newtheorem{prop}[theorem]{Proposition}

\theoremstyle{definition}
\newtheorem{definition}[theorem]{Definition}

\theoremstyle{remark}
\newtheorem*{remark}{Remark}

% new commands: general
\newcommand{\e}{\mathrm{e}} % exponential e
\newcommand{\N}{\mathbb{N}}
\newcommand{\R}{\mathbb{R}}

\newcommand{\dd}{\mathrm{d}} %integration d

\newcommand{\vect}[1]{\boldsymbol{#1}}

 %real part
 %imaginary part
%\newcommand{\e}{\mathrm{e}} %exponential e

 % support 

 % error function 
 %trace 
  %variance
 %covariance
 %correlation coefficient 

\newcommand{\be}{\begin{equation}}
\newcommand{\ee}{\end{equation}}
\newcommand{\ba}{\begin{equation} \begin{aligned}}
\newcommand{\ea}{\end{aligned}\end{equation}}
\newcommand{\bes}{\begin{equation*}}
\newcommand{\ees}{\end{equation*}}

%%% new commands: aus Wolfgangs Formatierung

\def\1{{\mathchoice {1\mskip-4mu\mathrm l}      % Blackboard bold 1
{1\mskip-4mu\mathrm l}
{1\mskip-4.5mu\mathrm l} {1\mskip-5mu\mathrm l}}}

%%%%%%
%% %%%%%%
% Front matter
\title{Revisiting Groeneveld's approach to the virial expansion}
\author{Sabine Jansen} 
\address{Mathematisches Institut, Ludwig-Maximilians-Universit{\"a}t, Theresienstr. 39, 80333 M{\"u}nchen; Munich Center for Quantum Science and Technology (MCQST), Schellingstr. 4, 80799 M{\"u}nchen, Germany.}
\email{jansen@math.lmu.de}

\date{19 September 2020}

\begin{document}

%%%%%%%%%%%%%%
% Front matter %% 

\begin{abstract}
	A generalized version of Groeneveld's convergence criterion for the virial expansion and generating functionals for weighted $2$-connected graphs is proven. The criterion works for inhomogeneous systems and yields bounds for the density expansions of the correlation functions $\rho_s$ (a.k.a.\ distribution functions or factorial moment measures) of grand-canonical Gibbs measures with pairwise interactions. The proof is based on recurrence relations for graph weights related to the Kirkwood-Salsburg integral equation for correlation functions. The proof does not use an inversion of the density-activity expansion, however a M{\"o}bius inversion on the lattice of set partitions enters the derivation of the recurrence relations. 	\\

\noindent \emph{Keywords}: equilibrium statistical mechanics; cluster expansions;  Kirkwood-Salsburg integral equation; 2-connected graphs and their generating functions.\\

\noindent \emph{Mathematics Subject Classification (2020)}: 82B05; 05A15.
\end{abstract}

\maketitle

\tableofcontents

\section{Introduction}

Graphical expansions of thermodynamic functionals and correlation functions play an important role in statistical mechanics and liquid state theory. Relevant quantities are expanded in powers of the activity $z$ or the density $\rho$, leading for example to the Mayer expansion and virial expansion for the pressure. In mathematical statistical physics, the expansions are used to establish absence of phase transitions and exponential decay of correlations~\cite{ruelle1969book}, though in this regard  disagreement percolation \cite{georgi-haggstrom-maes2001,dereudre2019survey,hofer-houdebert2019,benes-hofer-last-vecera2020}, Dobrushin uniqueness \cite{houdebert-zass2020dobrushin}, approaches based on Glauber birth and death dynamics or other algorithms \cite{ferrari-fernandez-garcia2002,fernandez-groisman-saglietti2016,helmuth-perkins-petti2020}, or recursive approaches and complex analysis \cite{meeron1970,michelen-perkins2020} often yield better results; in addition to expansions, there are also bounds \cite{lieb1963newmethod}.  In physical chemistry and density functional theory, diagrammatic expansions serve as a conceptual guide to various approximation schemes and closure relations notably for the Ornstein-Zernike equation, see \cite{hansen-mcdonald2013} and the discussion and references in~\cite{kuna-tsagkaro2018}.

It has been conjectured that the virial expansion converges in a bigger domain than the activity expansion~\cite{brydges2011iamp,groeneveld1967c}. Examples where this is proven include hard hexagons~\cite{joyce1988}, hard rods on a line (Tonks gas)~\cite{tonks1936,jansen2015tonks}, uniformly repulsive interactions in finite volume~\cite{brydges-marchetti2014, marchetti2015}, and a hierarchical mixture of cubes on a lattice \cite{jansen2020hierarchical}. A counter-example for an attractive double-well potential that favors dimerization of particles \cite{jansen2012mayer} suggests that the conjecture, if true at all, perhaps only applies to non-negative potentials. 

The best-known proofs of convergence for the virial expansion are based on an inversion. First one proves convergence of the activity expansion in powers of $z$, then one inverts the density-activity expansion $\rho = \rho(z)$ to obtain an expansion in powers of $\rho$~\cite{lebowitz-penrose1964}. Clearly this approach is ill-suited to a proof or disproof of the above-mentioned conjecture: any  convergence criterion for density expansions derived in this way inherits the limitations of activity expansions. 

Alternative approaches were given by Groeneveld~\cite{groeneveld1967c}, Pulvirenti and Tsagka\-rogiannis~\cite{pulvirenti-tsagkaro2012}, Ramawadh and Tate~\cite{ramawadh-tate2015}, and Nguyen and Fern{\'a}ndez~\cite{nguyen-fernandez2020}. Groeneveld's proof is based on recurrence relations for weighted graphs upon successive removal of edges incident to a given vertex.  Pulvirenti and Tsagkarogiannis tackled convergence directly in the canonical ensemble. Ramawadh, Tate, Nguyen, and Fern{\'a}ndez exploited Lagrange inversion and combinatorial features of the Penrose tree partition scheme. 

The aim of the present note is to revisit Groeneveld's approach, which may help address the above-mentioned conjecture in the future. Our approach is based on a recurrence relation similar to Groeneveld's except we remove a vertex and all its incident edges in one go (Proposition~\ref{prop:recursion}). The recurrence relations obtained in this way are similar to the recurrence relations of the coefficients of the activity expansions of the correlation functions \cite{penrose1963,minlos-poghosyan1977} inherited from the well-known Kirkwood-Salsburg equation. The principal difference is an additional expression with a M{\"o}bius inversion on the lattice of set partitions, in part inspired by recent work by Dorlas, Rebenko, and Savoie~\cite{dorlas-rebenko-savoie2020}. This corresponds to a version of the Kirkwood-Salsburg equation from which the activity is eliminated; a similar variant of the Kirkwood-Salsburg equation has in fact been derived from the canonical ensemble and applied to prove convergence of density expansions by Bogolyubov et al.~\cite{bogolyubov-hacet1949,bogoljubov-petrina-hacet1969}.

The recurrence relations allow for an inductive proof of an abstract convergence condition, which is our main result (Theorem~\ref{thm:main}). The convergence condition is similar to the convergence condition for activity expansions based on Kirkwood-Salsburg equations given by Bissacot, Fern{\'a}ndez, Procacci \cite{bissacot-fernandez-procacci2010} for discrete polymer systems and Jansen and Kolesnikov \cite{jansen-kolesnikov2020} for continuum Gibbs measures. Known convergence criteria by Lebowitz and Penrose (for non-negative potentials), Groeneveld, and Nguyen and Fern{\'a}ndez are easily recovered, in addition the theorem is applicable to inhomogeneous systems and it yields bounds for the density expansions of all factorial moment measures (also known as correlation functions or distribution functions), thus complementing the results on the pressure and free energy for inhomogeneous systems in \cite{jansen-kuna-tsagkaro2019}.

%
%\red{modify the following sentences:} The convergence of density expansions of truncated correlation functions for homogeneous systems, in the canonical ensemble, is treated by Kuna and Tsagkarogiannis~\cite{kuna-tsagkaro2018}; we are not aware of any other rigorous convergence result for density expansions of correlation functions. The first rigorous treatement of the convergence of the density expansion of the pressure and free energy for inhomogeneous systems in $\R^d$ was, to the best of our knowledge, given in~\cite{jansen-kuna-tsagkaro2019}.

The article is organized as follows. In Section~2 we present our main convergence theorem and explain how to recover from it the convergence criteria by Lebowitz and Penrose~\cite{lebowitz-penrose1964}, Groeneveld~\cite{groeneveld1967c}, and Nguyen and Fern{\'a}ndez \cite{nguyen-fernandez2020}. We also discuss the relation with the work~\cite{jansen-kuna-tsagkaro2019}. The main proof ingredient is a set of recurrence relations for $2$-connected graphs, presented in Section~\ref{sec:recurrence}. Sections~\ref{sec:mainproof} and~\ref{sec:homogeneous} proceed with the proofs of the main convergence theorem and the derived theorems for homogeneous systems in which we recover known convergence criteria. In Section~\ref{sec:activity} we sketch the relation between the activity and density expansions. 

The results and proofs in the main body of the article are phrased in terms of generating functions of weighted labelled graphs, in principle they do not require any knowledge in statistical mechanics. This choice of presentation is motivated by interest in the virial  expansions from a combinatorial point of view \cite{leroux2004,kaouche-leroux2009, faris2010combinatorics, tate2015puzzle}. 
The relation of these generating functions with correlation functions of grand-canonical Gibbs measures is recalled in Appendix~\ref{app:gibbs}. 
%, the relation of our convergence condition and recurrence relations with the Kirkwood-Salsburg equation is discussed in Appendix~\ref{app:ks}. 

\section{Main results} \label{sec:main}

Let $(\mathbb X,\mathcal X)$ be a measurable space and $v$ a pair potential, i.e., a measurable function $v:\mathbb X\times \mathbb X\to \R\cup \{\infty\}$ that is symmetric ($v(x,y) = v(y,x)$ on $\mathbb X^2$). 
Mayer's $f$-function is 
\[
	f(x,y) = \e^{- v(x,y)}  -1. 
\] 
For $n\in\N$, let $\mathcal G_n\supset \mathcal C_n\supset \mathcal D_n$ be the sets of all graphs, connected graphs, and $2$-connected graphs with vertex set $[n] = \{1,\ldots, n\}$. The graphs considered here are undirected and without loops or multiple edges. Remember that a graph is $2$-connected if it is connected and for every vertex $i$, removal of the vertex $i$ and all incident edges results in a graph that is still connected. 

The weight of  a graph $G\in \mathcal G_n$, given $(x_1,\ldots, x_n)\in \mathbb X^n$, is 
\[
	w(G;x_1,\ldots, x_n) := \prod_{\{i,j\}\in E(G)} f(x_i,x_j),
\] 
the product over the empty set is defined to be one. Given a measure $\rho$ on $(\mathbb X,\mathcal X)$, our main goal is to investigate the convergence of the function $\bar d$ on $\mathbb X$ given by 
\[
	\bar d(x_1;\rho):= \sum_{n=1}^\infty \frac{1}{n!}\int_{\mathbb X^n}\Bigl| \sum_{G\in \mathcal D_{n+1}} w(G;x_1,\ldots,x_{n+1})\Bigr| \rho(\dd x_2)\cdots \rho(\dd x_{n+1}). 
\] 
Let us briefly recall a simple convergence condition for $\bar d(x_1;\rho)$, so as to provide some context for Theorem~\ref{thm:main} below. 

\begin{theorem} \label{thm:jkt} \cite{jansen-kuna-tsagkaro2019}
	Suppose that $v\geq 0$ on $\mathbb X^2$ and that there exists a non-negative measurable function $a:\mathbb X\to \R_+$ such that 
	\be \label{eq:suff-jkt}
		\int_{\mathbb X} |f(x,y)|\, \e^{2 a(y)} \rho(\dd y) \leq a(x)
	\ee
	for all $x\in \mathbb X$. Then $\bar d(x;\rho) \leq a(x) <\infty$ on $\mathbb X$. 
\end{theorem}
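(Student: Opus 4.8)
The plan is to turn the sign structure forced by $v\ge0$ into genuine cancellation. Since $v\ge0$ we have $f(x,y)=\e^{-v(x,y)}-1\in[-1,0]$, so that $|f(x,y)|=1-\e^{-v(x,y)}\in[0,1]$ and $1+f(x,y)=\e^{-v(x,y)}\in[0,1]$, and $w(G)=(-1)^{|E(G)|}\prod_{e\in E(G)}|f_e|$. A bare triangle inequality on $\sum_{G\in\mathcal D_{n+1}}w(G)$ is useless, because $|\mathcal D_{n+1}|$ grows super-exponentially while $|f|\le1$; convergence must come from the alternation of signs. The first and decisive step is therefore a tree-graph (Penrose-type) inequality for $2$-connected graphs.

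To obtain it I would fix a Penrose partition scheme on $\mathcal D_{n+1}$, i.e.\ a map $G\mapsto m(G)$ onto canonically chosen minimal $2$-connected spanning subgraphs whose fibres are order intervals $[\,m(G),\,m(G)\cup A(G)\,]$ in the edge lattice. Summing $\prod_{e\in G}f_e$ over a fibre factorizes as $\prod_{e\in m(G)}f_e\prod_{e\in A(G)}(1+f_e)$, and because $0\le1+f_e\le1$ the second product is bounded by $1$ in absolute value. This gives
\[
  \Bigl|\sum_{G\in\mathcal D_{n+1}}w(G;x_1,\dots,x_{n+1})\Bigr|\le\sum_{S\in\mathcal S_{n+1}}\prod_{\{i,j\}\in E(S)}|f(x_i,x_j)|,
\]
where $\mathcal S_{n+1}$ is the family of minimal $2$-connected spanning subgraphs of the complete graph on $[n+1]$. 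The structural feature I would exploit downstream is that $2$-connectivity forbids leaves: every vertex of every $S\in\mathcal S_{n+1}$ has degree at least two.

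It remains to integrate the right-hand side against $\rho(\dd x_2)\cdots\rho(\dd x_{n+1})/n!$ and sum over $n$. Rooting the structures at vertex $1$ and integrating from the outermost vertices inward, each vertex $y$ is reached through a factor $\int_{\mathbb{X}}|f(x,y)|(\cdots)\rho(\dd y)$; since $y$ carries at least two incident edges, the generating function $g$ of the substructures rooted at a single vertex admits a bound of the form $g(x)\le\exp\bigl(2\int_{\mathbb{X}}|f(x,y)|g(y)\,\rho(\dd y)\bigr)$, the exponent $2$ recording the two forced connections, while $\bar d(x;\rho)\le\int_{\mathbb{X}}|f(x,y)|g(y)\,\rho(\dd y)$. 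The hypothesis \eqref{eq:suff-jkt} says precisely that $\e^{2a}$ is a supersolution of this relation: if $g\le\e^{2a}$ then $g(x)\le\exp\bigl(2\int_{\mathbb{X}}|f(x,y)|\e^{2a(y)}\rho(\dd y)\bigr)\le\e^{2a(x)}$, so the bound $g\le\e^{2a}$ propagates, and then $\bar d(x;\rho)\le\int_{\mathbb{X}}|f(x,y)|\e^{2a(y)}\rho(\dd y)\le a(x)$. Working first with the truncated sums $\sum_{n\le N}$ and passing to the limit by monotone convergence upgrades this to $\bar d(x;\rho)\le a(x)<\infty$ for the full series.

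I expect the main obstacle to be the first step: producing a bona fide Penrose scheme for $\mathcal D_{n+1}$ and checking that the residual edge sets $A(G)$ always carry factors $1+f_e=\e^{-v}\in[0,1]$, together with the bookkeeping that converts the degree-$\ge2$ condition into precisely the exponent $2$ in $\e^{2a}$. For merely connected graphs the analogous scheme yields spanning trees and exponent $1$ (the classical Kirkwood--Salsburg condition); the doubling is the quantitative imprint of $2$-connectivity, and securing exactly this power, rather than an unwanted larger one, is the delicate point.
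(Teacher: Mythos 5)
Your first step fails: there is no Penrose-type partition scheme on $\mathcal D_n$ whose blocks are Boolean intervals with minimally $2$-connected bottoms, and this can already be checked for $n=4$. The set $\mathcal D_4$ has $10$ elements (three $4$-cycles, six copies of $K_4$ minus an edge, and $K_4$ itself), while the minimally $2$-connected spanning subgraphs are only the three $4$-cycles. An interval whose bottom is a $4$-cycle $C$ has size $1$, $2$, or $4$, and size $4$ forces its top to be $K_4$; since two size-$4$ intervals would both contain $K_4$, any disjoint family of intervals with (necessarily distinct) $4$-cycle bottoms covers at most $4+2+2=8<10$ graphs. Hence any partition of $\mathcal D_4$ into intervals must use a bottom that is not minimally $2$-connected, and your inequality $\bigl|\sum_{G\in\mathcal D_{n+1}}w(G)\bigr|\le\sum_{S\in\mathcal S_{n+1}}\prod_{e\in E(S)}|f_e|$ is not established. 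This is not a repairable bookkeeping issue but the central known obstruction of the subject: the absence of such a scheme for $2$-connected graphs is precisely why virial bounds are obtained by indirect means (Lagrange inversion on the connected-graph scheme in Ramawadh--Tate and Nguyen--Fern\'andez, vertex-removal recurrences in Groeneveld and in this paper). Your second stage is also not a proof as it stands: ``integrating from the outermost vertices inward'' presupposes a leaf-by-leaf, tree-like recursive structure, but minimally $2$-connected graphs have no leaves, so the relation $g(x)\le\exp\bigl(2\int_{\mathbb X}|f(x,y)|g(y)\,\rho(\dd y)\bigr)$ and the bound $\bar d(x;\rho)\le\int_{\mathbb X}|f(x,y)|g(y)\,\rho(\dd y)$ are asserted rather than derived.

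For comparison: the paper does not prove Theorem~\ref{thm:jkt} at all; it quotes it from \cite{jansen-kuna-tsagkaro2019}, where it follows from Theorem~3.4 of that work, proved by a fixed-point/inductive argument for a Kirkwood--Salsburg-type identity rather than by any partition scheme. Notably, the present paper's own machinery, namely Theorem~\ref{thm:main} with the ansatz $m(x_1,\ldots,x_s)=\e^{b(x_1)+\cdots+b(x_s)}$, recovers only the \emph{weaker} condition~\eqref{eq:suff2}, as discussed at the end of Section~\ref{sec:main}. If you want a self-contained argument in the spirit of this paper, the viable route is through recurrences obtained by deleting a vertex together with its incident edges (Propositions~\ref{prop:2conn} and~\ref{prop:recursion}) and an induction on the number of integrated vertices, not through a tree-graph inequality for $2$-connected graphs.
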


\noindent Theorem~\ref{thm:jkt} follows from Theorem~3.4 in~\cite{jansen-kuna-tsagkaro2019}. The latter has a  more general convergence condition and covers attractive potentials as well. 

Our treatment of the virial expansion requires the introduction of another set of graphs. Let $W$ and $B$ be two finite disjoint sets. The set $W$ is non-empty but $B$ is allowed to be empty.
We call elements of $W$ white and elements of $B$ black, and define a set of graphs $\mathcal D(W,B)$ with vertex set $W\cup B$ as follows. If $\#W\geq 2$, then a graph $G$ is in $\mathcal D(W,B)$ if and only if: 
\begin{enumerate}[(i) ]
	\item Every black vertex is connected to at least one white vertex by a path in $G$. 
	\item The previous property survives the removal of any vertex (black or white). 
\end{enumerate}  
If $W=\{w\}$ is  a singleton, we define $\mathcal D(\{w\},B):= \varnothing$ if $B \neq \varnothing$ and $\mathcal D(\{w\},\varnothing) =\{G\}$ with $\{G\}$ the graph with unique  vertex $w$ and no edge. In Appendix~\ref{app:gibbs} we recall that these graphs are associated with the density expansions of the correlation functions in statistical mechanics, see also~\cite[Section 6]{stell1964}. A combinatorial motivation is provided by the following lemma.

\begin{lemma} \label{lem:combi2conn}
	Let $G\in \mathcal G_n$ be a graph with at least $n\geq 3$ vertices. Let $W$ be the set of vertices adjacent to $1$, denote $B := \{2,\ldots, n\}\setminus W$, and $G'$ the graph obtained from $G$ by removing $1$ and all incident edges. Then $G$ is $2$-connected if and only if $G'$ is connected and $G'$ is in $\mathcal D(W,B)$ with $\#W\geq 2$. 
\end{lemma}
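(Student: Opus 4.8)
The plan is to prove the two implications of the equivalence separately, in each case unfolding the definition of $2$-connectivity into its two constituent requirements (valid since $n\geq 3$): that $G$ is connected and that $G-i$ is connected for every vertex $i$. Throughout I will use the single structural observation that the neighbours of vertex $1$ in $G$ are \emph{exactly} the white vertices $W$, so that $G' = G-1$ and, after deleting any further vertex $u\in W\cup B$, vertex $1$ remains adjacent in $G-u$ precisely to $W\setminus\{u\}$. I will also keep in mind that $\mathcal{D}(W,B)$-membership does not by itself force connectedness of $G'$, which is why the statement lists ``$G'$ connected'' as a separate condition.

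For the forward implication, assume $G$ is $2$-connected. I would first check $\#W\geq 2$: connectedness forces $\#W\geq 1$, and if $W=\{v\}$ were a singleton then $v$ would be a cut vertex, since its removal isolates $1$ while $n\geq 3$ guarantees a third vertex remains, contradicting $2$-connectivity. Next, $G'=G-1$ is connected because $G$ is $2$-connected, which immediately gives property (i) of $\mathcal{D}(W,B)$ as $W\neq\varnothing$. The substantive point is property (ii): fix a vertex $u$ of $G'$ and a black vertex $b\neq u$. Since $G-u$ is connected, take a path from $b$ to $1$ in $G-u$; its penultimate vertex is a neighbour of $1$, hence white, and truncating the path there yields a walk that avoids both $1$ and $u$, i.e.\ a path in $G'-u$ from $b$ to a white vertex in $W\setminus\{u\}$. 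Thus (ii) holds and $G'\in\mathcal{D}(W,B)$.

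For the converse, assume $\#W\geq 2$, that $G'$ is connected, and that $G'\in\mathcal{D}(W,B)$. Connectedness of $G$ is immediate: $G'$ is connected and vertex $1$ attaches to it through any vertex of $W$. Deleting vertex $1$ returns $G'$, which is connected by hypothesis. It remains to show $G-u$ is connected for every $u\in W\cup B$. Here I would argue that all vertices of $G-u$ lie in the component of $1$: since $\#W\geq 2$ the set $W\setminus\{u\}$ is nonempty and each of its vertices is adjacent to $1$, while property (ii) applied to the removed vertex $u$ sends every remaining black vertex, along a path in $G'-u$ and hence in $G-u$, to some vertex of $W\setminus\{u\}$, and thus to $1$.

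The one step that requires genuine care --- and where I expect the only real obstacle --- is property (ii) in the forward direction: one must produce a path that terminates at a white vertex while simultaneously avoiding both the deleted vertex $u$ and vertex $1$. The path-truncation argument above resolves this cleanly, the key being that the penultimate vertex of any path into $1$ is automatically white and that working inside the connected graph $G-u$ keeps $u$ out of the way. The remaining verifications --- connectivity bookkeeping, the singleton case $\#W=1$, and the vacuous handling of $B=\varnothing$ --- are routine.
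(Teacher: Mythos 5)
Your proof is correct and complete: both implications are established, the key step (producing a path to a white vertex avoiding both $1$ and the deleted vertex $u$ via truncation at the penultimate vertex of a path to $1$ in $G-u$) is sound, and you correctly treat $\#W\geq 2$, the cut-vertex argument ruling out $\#W=1$, and the vacuous case $B=\varnothing$. The paper explicitly leaves this proof to the reader, so there is no printed argument to compare against; yours is precisely the natural elementary argument the author had in mind, unfolding the paper's definition of $2$-connectivity (connected, and $G-i$ connected for every $i$) against properties (i) and (ii) of $\mathcal{D}(W,B)$.
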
 

\noindent The elementary proof is left to the reader. Graphs in $\mathcal D(W,B)$ are connected when there are exactly two white vertices but need not be connected when there are three white vertices or more. 
%Indeed, when $\#W =2$, every black vertex must be connected to both white vertices (otherwise, removal of one white vertex would destroy property (i)); in particular, the graph is connected. When $W = \{1,2,3\}$ and $B = \{4\}$, for example, the graph with vertex set $W\cup B$ and edges $\{1,4\}$, $\{4,2\}$ is in $\mathcal D(W,B)$ but it is not connected.

Define 
\be \label{eq:psidef}
	\psi\bigl((x_i)_{i\in I}; (x_j)_{j\in J}\bigr):= \sum_{G\in \mathcal D(I,J)} w\bigl( G;(x_i)_{i\in I\cup J}\bigr).
\ee
Notice 
\be \label{eq:psi0}
	\psi(x_1,\ldots, x_s;\varnothing) =  \sum_{G\in \mathcal D([s];\varnothing)} w(G;x_1,\ldots, x_s) = \prod_{1\leq i< j \leq s} \bigl(1+ f(x_i,x_j)\bigr).
\ee
Further define 
\[
	\bar g (x_1,\ldots, x_s;\rho): = \sum_{n=0}^\infty \frac{1}{n!}\int_{\mathbb X^n} \bigl| \psi(x_1,\ldots,x_s;x_{s+1},\ldots, x_{s+n})\bigr| \rho(\dd x_{s+1}) \cdots \rho(\dd x_{s+n}). 
\] 
The contribution from $n=0$ is $|\psi(x_1,\ldots, x_s;\varnothing)|$. 

We view $\bar g(\cdot;\rho)$ as a function from $\sqcup_{s\in \N}\mathbb X^s$ to $\R_+\cup \{\infty\}$. The disjoint union $\sqcup_{s\in \N} \mathbb X^s$ is equipped with the direct sum of the product $\sigma$-algebras $\mathcal X^s$, the function $\bar g(\cdot;\rho)$ is measurable. 

For a concise formulation of our main convergence theorem, we introduce a nonlinear operator on non-negative functions. For $m:\sqcup_{s\in \N} \mathbb X^s\to \R_+$ a measurable function, define a new function $\mathbf K_\rho m:\mathbb X^s \to \R_+\cup \{\infty\}$ as follows. 
For $x_1\in \mathbb X$, set
\[
	(\mathbf K_\rho m)(x_1) = 1
\] 
and introduce the auxiliary function 
\be \label{eq:Rdef} 
	(R_\rho m)(x_1) = \sum_{j=1}^\infty \frac{1}{j!}\int_{\mathbb X^j}\prod_{i=1}^j  |f(x_1,y_j)| m(y_1,\ldots, y_j) \rho^j(\dd \vect y).
\ee
For $s \geq 2$ and $(x_1,\ldots, x_s) \in \mathbb X^s$, set 
\begin{multline*} 
	(\mathbf K_\rho m)(x_1,\ldots,x_s) 
	= \prod_{i=2}^s \bigl( 1+ f(x_1,x_i)\bigr) \Biggl( m(x_2,\ldots,x_s) \\
	+ \sum_{j=1}^\infty \frac{1}{j!}\int_{\mathbb X^j} \prod_{i=1}^j |f(x_1,y_i)| m(x_2,\ldots,x_s,y_1,\ldots, y_j) \rho^j(\dd \vect y)\Biggr)\\
	\times \Biggl( \sum_{k=0}^\infty  \Bigl( (R_\rho m)(x_1) \Bigr)^k \Biggr).
\end{multline*} 
The fixed-point equation $\mathbf K_\rho m = m$ is closely related to the Kirkwood-Salsburg equation, see Eq.~\eqref{eq:ks} in Appendix~\ref{app:gibbs}. Our main theorem is the following abstract convergence condition. 

\begin{theorem} \label{thm:main}
	Suppose there exists a measurable function $m:\sqcup_{s\in \N} \mathbb X^s\to \R^+$ with $(R_\rho m)(x_1) <1$ for all $x_1\in \mathbb X$ and 
	\be \label{eq:suff1}
		(\mathbf K_\rho m)(x_1,\ldots, x_s) \leq m (x_1,\ldots, x_s)
	\ee
	for all $s\in \N$ and $(x_1,\ldots, x_s) \in \mathbb X^s$. Then 
	\[
		\bar g(x_1,\ldots, x_s;\rho) \leq m(x_1,\ldots, x_s) < \infty
	\] 
	for all $s\in \N$ and $(x_1,\ldots, x_s) \in \mathbb X^s$ 
	and 
	\[
		\bar d(x_1;\rho) \leq - \log \Bigl( 1 - (R_\rho m)(x_1)\Bigr) < \infty 
		%\sum_{j=1}^\infty \frac{1}{j!} \int_{\mathbb X^j}\prod_{i=1}^j |f(x_1,y_j)| m(y_1,\ldots, y_s) \rho^s(\dd \vect y)\Biggr) <\infty
	\] 
	for all $x_1\in \mathbb X$. 
\end{theorem}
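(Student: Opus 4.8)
The plan is to reduce everything to a recurrence relation obtained by deleting the distinguished white vertex $x_1$, and then to run a monotone induction on the total number of vertices.

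\emph{Step 1: a recurrence mirroring $\mathbf K_\rho$.} First I would derive, for the signed generating function $\psi$ of \eqref{eq:psidef}, an identity expressing $\psi(x_1,\ldots,x_s;x_{s+1},\ldots,x_{s+n})$ through graph sums on the vertex set with $x_1$ removed. Using Lemma~\ref{lem:combi2conn} (and its analogue for $\mathcal D(W,B)$), the deletion of $x_1$ organizes into three contributions: the optional edges joining $x_1$ to the remaining white vertices, which sum to the prefactor $\prod_{i=2}^s(1+f(x_1,x_i))$; the edges joining $x_1$ to further vertices that are thereby promoted to white, giving the bracket $m(x_2,\ldots,x_s)+\sum_j\frac1{j!}\int\prod|f(x_1,y_i)|\,m(x_2,\ldots,x_s,y_1,\ldots,y_j)\,\rho^j$ after the next step; and the connected components hanging off $x_1$ that must be reabsorbed. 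The last contribution is where a Möbius inversion on the lattice of set partitions enters and produces the geometric factor $\sum_{k\ge0}\big((R_\rho m)(x_1)\big)^k$, the third factor of $\mathbf K_\rho$. Taking absolute values and integrating against $\rho$ term by term, the triangle inequality turns each signed $f(x_1,\cdot)$ into $|f(x_1,\cdot)|$ and each $\psi$ into $|\psi|$, yielding for the absolute generating function $\bar g$ the sub-solution inequality $\bar g\le\mathbf K_\rho\bar g$. This is Proposition~\ref{prop:recursion}, and it is the combinatorial heart of the argument.

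\emph{Step 2: monotone induction on the number of vertices.} I would grade the problem by the total number of vertices. For $N\in\N$ let $\bar g^{[N]}(x_1,\ldots,x_s)$ denote the truncation of $\bar g$ to graphs on at most $N$ vertices, so that $\bar g^{[N]}\uparrow\bar g$ as $N\to\infty$. The operator $\mathbf K_\rho$ is monotone on non-negative functions with $(R_\rho\,\cdot)<1$: it is a product of factors each assembled from $m$ by non-negative operations (the weights $1+f=\e^{-v}\ge0$ and $|f|\ge0$, sums, and $\rho$-integrals) followed, in the third factor, by the increasing map $t\mapsto\sum_{k\ge0}t^k$ applied to $R_\rho m$. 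The decisive observation is that in the Step~1 recurrence every $\psi$-factor on the right-hand side lives on a vertex set that omits $x_1$, hence has at most $N-1$ vertices. Assuming inductively $\bar g^{[N-1]}\le m$ on all of $\sqcup_s\mathbb X^s$, each such factor is bounded by $m$; since all terms are non-negative, the truncated right-hand side is dominated by the full $(\mathbf K_\rho m)(x_1,\ldots,x_s)$, which is $\le m(x_1,\ldots,x_s)$ by \eqref{eq:suff1}. The only genuine base case is $N=1$, where $\bar g^{[1]}(x_1)=1\le m(x_1)$ because $(\mathbf K_\rho m)(x_1)=1$. Letting $N\to\infty$ and using monotone convergence gives $\bar g\le m<\infty$.

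\emph{Step 3: the bound on $\bar d$.} Finally I would use Lemma~\ref{lem:combi2conn} to rewrite the sum of $w(G)$ over $2$-connected graphs containing $x_1$, after deleting $x_1$, as a sum over \emph{connected} graphs in $\mathcal D(W,B)$ whose white set $W$ is the neighbourhood of $x_1$. By the exponential formula / set-partition Möbius inversion, passing from all graphs attached to $x_1$ to the connected ones replaces the geometric factor of Step~1, $1/(1-(R_\rho\,\cdot))$, by its logarithm; concretely $\bar d$ is controlled by $|\log(1+G(x_1))|$, where $G(x_1)$ is the signed generating function of all attached structures and $|G(x_1)|\le(R_\rho\bar g)(x_1)$. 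The triangle inequality applied to $\log(1+w)=\sum_{k\ge1}(-1)^{k-1}w^k/k$ gives $|\log(1+w)|\le\sum_{k\ge1}|w|^k/k=-\log(1-|w|)$, whence $\bar d(x_1;\rho)\le-\log\!\big(1-(R_\rho\bar g)(x_1)\big)$. Since $\bar g\le m$ and both $R_\rho$ and $t\mapsto-\log(1-t)$ are increasing, the right-hand side is at most $-\log\!\big(1-(R_\rho m)(x_1)\big)$, which is finite because $(R_\rho m)(x_1)<1$.

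\emph{Main obstacle.} Steps~2 and~3 are essentially routine once the recurrence is available: the monotonicity of $\mathbf K_\rho$, the vertex-count induction, and the manipulation of the geometric and logarithmic series are all straightforward. The crux is Step~1 — proving that deletion of $x_1$ really does reproduce the three factors of $\mathbf K_\rho$, and in particular isolating the geometric factor $\sum_{k\ge0}((R_\rho\,\cdot))^k$ from the articulation/block structure around $x_1$ by the correct Möbius inversion on set partitions. Getting the bookkeeping exactly right — which vertices are promoted to white, which components are reabsorbed, and that the same inversion read at the level of connected graphs yields the logarithm needed in Step~3, all compatibly with the triangle inequality — is the delicate point.
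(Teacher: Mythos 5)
Your proposal is correct and follows the paper's own route in outline — your Step 1 is Proposition~\ref{prop:recursion}, your Step 3 is Propositions~\ref{prop:2conn} and~\ref{prop:2conn2} — but your Step 2 differs from the paper in one point that genuinely matters: the grading of the induction. The paper truncates $\bar g$ by the number of \emph{black} (integrated) vertices, defining $\bar g^{(N)}_\rho$ in~\eqref{eq:hbardef}, and asserts in Proposition~\ref{prop:fpi} that $\bar g^{(N+1)}_\rho\le \mathbf K_\rho\bar g^{(N)}_\rho$. That assertion is delicate precisely at top order: in the recurrence of Proposition~\ref{prop:recursion}, the contribution $L=\varnothing$ (graphs in which no black vertex is adjacent to $x_1$) produces $\psi(I';J)$ with the black set $J$ left intact, and when $\#J=N+1$ this term survives on the left of the claimed inequality but is killed by the truncation on the right. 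Concretely, if $f(x_1,\cdot)\equiv 0$ then $\bar g^{(N+1)}_\rho(x_1,x_2,x_3)=\bar g^{(N+1)}_\rho(x_2,x_3)$, whereas $(\mathbf K_\rho\bar g^{(N)}_\rho)(x_1,x_2,x_3)=\bar g^{(N)}_\rho(x_2,x_3)$, so the inequality of Proposition~\ref{prop:fpi} can fail; Theorem~\ref{thm:main} itself is unaffected, but the paper's single induction over $N$ (with all $s$ treated simultaneously) does not close without an extra argument handling this term, e.g.\ a joint induction in $s$ and $N$. Your grading by the \emph{total} number of vertices is exactly such an argument: as you observe, every $\psi$-factor produced by deleting $x_1$ lives on a vertex set omitting $x_1$, hence has at most $N-1$ vertices, so the induction hypothesis $\bar g^{[N-1]}\le m$ applies to every factor and the step $\bar g^{[N]}\le \mathbf K_\rho\bar g^{[N-1]}\le\mathbf K_\rho m\le m$ closes with no boundary case. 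So your variant is not mere bookkeeping; it is the cleaner, and strictly speaking the correct, way to run the paper's induction.

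Two minor slips that do not affect the substance. First, the inequality $\bar g\le\mathbf K_\rho\bar g$ announced at the end of your Step 1 is the content of Proposition~\ref{prop:fpi}, not of Proposition~\ref{prop:recursion}, which is an exact signed identity for $\psi$. Second, in Step 3 the bound must be read coefficientwise (as a majorant-series statement): $\bar d(x_1;\rho)$ is a sum of absolute values of expansion coefficients, bounded by $\sum_{k\ge 1}\tfrac1k\bigl((R_\rho\bar g)(x_1)\bigr)^k=-\log\bigl(1-(R_\rho\bar g)(x_1)\bigr)$ via Proposition~\ref{prop:2conn} and the $(m-1)!$ from the M{\"o}bius inversion; it is not in general dominated by $|\log(1+G(x_1))|$ itself.
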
 

\noindent The theorem is inspired by a similar convergence condition for activity expansion~\cite{jansen-kolesnikov2020}, see Theorem~\ref{thm:activity} below. Theorem~\ref{thm:main} is proven in Section~\ref{sec:mainproof}. 

Before we discuss the relation of Theorem~\ref{thm:main} to Theorem~\ref{thm:jkt}, let us explain how to recover some known convergence criteria.  We specialize to $\mathbb X = \R^d$ with the Borel $\sigma$-algebra and translationally invariant pair potentials $v(x,y) = v(0,y-x)$ which we write as $v(y-x)$ instead. We assume that the pair potential $v$ is stable, i.e., there exists $B\in (0,\infty)$ such that 
\[
	\sum_{1\leq i <j \leq n} v(x_i-x_j) \geq - n B
\] 
for all $n\geq 2$ and $x_1,\ldots, x_n\in \R^d$. Further assume that 
\[
	C:= \int_{\R^d}\bigl|\e^{- v(y)} - 1\bigr|\dd y 
\] 
is finite. Let $\lambda$ be the Lebesgue measure on $\R^d$. We work with translationally invariant measures $\rho \lambda(\dd x)$ where $\rho>0$ and write $\bar g(\vect x;\rho)$ and $\bar d(x_1;\rho)$ instead of $\bar g(\vect x;\rho \lambda)$ and  $\bar d(x_1;\rho\lambda)$. The following theorems are proven in Section~\ref{sec:homogeneous} by specializing Theorem~\ref{thm:main} to different choices of weight functions $m(x_1,\ldots, x_s)$. 

\begin{theorem} \label{thm:LP}
Let $v$ be a stable translationally invariant pair potential in $\R^d$ and $\rho>0$. 
	Set $u:= \e^{2 B}$. Assume there exists $\kappa \geq 0$ such that 
	\be \label{eq:lpsuff} 
		 u\e^{C\rho \kappa} \leq \kappa (2 - \e^{C\rho \kappa}).
	\ee
	Then for all $s\geq 2$ and $x_1,\ldots, x_s\in \R^d$, 
	\[
		\bar g(x_1,\ldots, x_s;\rho) \leq \kappa^{s}
	\]
	and 
	\[
		\bar d(x_1;\rho) \leq - \log \Bigl(2-\e^{C\rho\kappa} \Bigr)<\infty.
	\]
\end{theorem}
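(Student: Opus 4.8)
The plan is to deduce Theorem~\ref{thm:LP} from Theorem~\ref{thm:main} by making a concrete ansatz for the weight function $m$. Since the setting is homogeneous (translationally invariant potential and measure $\rho\lambda$), one expects the correct bound $\bar g \leq \kappa^s$ to come from the constant choice
\be \label{eq:ansatz}
	m(x_1,\ldots,x_s) = \kappa^s \qquad (s\in \N),
\ee
with $m \equiv \kappa$ when $s=1$. With this ansatz I would first compute the auxiliary quantity $(R_\rho m)(x_1)$. Plugging $m(y_1,\ldots,y_j)=\kappa^j$ into the definition~\eqref{eq:Rdef} and using that $\int_{\R^d}|f(y)|\,\dd y = C$ by translation invariance, the $j$-fold integral factorizes into $(C\rho\kappa)^j$, so that the sum over $j$ becomes $\sum_{j\geq 1}(C\rho\kappa)^j/j! = \e^{C\rho\kappa}-1$. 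Thus $(R_\rho m)(x_1) = \e^{C\rho\kappa}-1$, a constant independent of $x_1$.

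\smallskip

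The next step is to verify the two hypotheses of Theorem~\ref{thm:main}. The condition $(R_\rho m)(x_1)<1$ becomes $\e^{C\rho\kappa}-1<1$, i.e.\ $\e^{C\rho\kappa}<2$, which is exactly the requirement that the right-hand side $\kappa(2-\e^{C\rho\kappa})$ in~\eqref{eq:lpsuff} be positive; the hypothesis~\eqref{eq:lpsuff} therefore forces $\e^{C\rho\kappa}<2$. For the fixed-point inequality~\eqref{eq:suff1} I would evaluate $(\mathbf K_\rho m)(x_1,\ldots,x_s)$ for $s\geq 2$. The geometric factor $\sum_{k\geq 0}((R_\rho m)(x_1))^k$ equals $(1-(\e^{C\rho\kappa}-1))^{-1} = (2-\e^{C\rho\kappa})^{-1}$. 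The inner bracket, with $m\equiv\kappa^{\,\cdot}$, equals $\kappa^{s-1} + \sum_{j\geq 1}\tfrac1{j!}(C\rho)^j \kappa^{s-1+j} = \kappa^{s-1}\e^{C\rho\kappa}$. The only term that is not a clean constant is the stability prefactor $\prod_{i=2}^s(1+f(x_1,x_i))$; here I would use stability of $v$ to bound it. Since $\prod_{i=2}^{s}(1+f(x_1,x_i)) = \exp(-\sum_{i=2}^s v(x_1-x_i))$ and the stability bound for the $s$-point configuration gives $\sum_{i<j}v(x_i-x_j)\geq -sB$, one obtains a bound of the form $\e^{2B}=u$ for this product after the standard argument (distributing the stability bound across vertices); this is the point where $u=\e^{2B}$ enters.

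\smallskip

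Assembling these three factors, the fixed-point inequality~\eqref{eq:suff1} reduces to
\be \label{eq:reduced}
	u \cdot \kappa^{s-1}\e^{C\rho\kappa} \cdot \frac{1}{2-\e^{C\rho\kappa}} \leq \kappa^s,
\ee
which, after cancelling $\kappa^{s-1}$ and multiplying by $(2-\e^{C\rho\kappa})$, is precisely the hypothesis $u\e^{C\rho\kappa}\leq \kappa(2-\e^{C\rho\kappa})$ of~\eqref{eq:lpsuff}. Note the inequality is independent of $s$, which is why the constant ansatz works and the same $\kappa$ controls all orders. Theorem~\ref{thm:main} then yields $\bar g(x_1,\ldots,x_s;\rho)\leq \kappa^s$ for all $s$, and the bound on $\bar d$ follows from $\bar d(x_1;\rho)\leq -\log(1-(R_\rho m)(x_1)) = -\log(1-(\e^{C\rho\kappa}-1)) = -\log(2-\e^{C\rho\kappa})$, which is exactly the claimed estimate.

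\smallskip

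I expect the main obstacle to be the careful justification of the stability bound for the product $\prod_{i=2}^s(1+f(x_1,x_i))$, since a naive application of stability to the pairs $\{1,i\}$ alone does not immediately give a constant uniform in $s$; the standard trick is to invoke stability of the full configuration and the symmetry that lets one distribute the $-nB$ bound, so that each vertex absorbs a factor $\e^{2B}$. One should also confirm that the edge case $s=2$ (where $\mathcal D(W,B)$-graphs are connected and the geometric factor still appears correctly) and the definitional value $(\mathbf K_\rho m)(x_1)=1\leq \kappa$ when $s=1$ are consistent; the latter requires $\kappa\geq 1$, which I would check is implied by~\eqref{eq:lpsuff} together with $u=\e^{2B}\geq 1$.
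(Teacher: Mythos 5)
Your overall strategy --- the constant ansatz $m(x_1,\ldots,x_s)=\kappa^s$, the computation $(R_\rho m)(x_1)=\e^{C\rho\kappa}-1$, the reduction of the fixed-point inequality to~\eqref{eq:lpsuff}, and the check that $\kappa\geq 1$ handles the case $s=1$ --- is exactly the paper's. However, there is a genuine gap at precisely the point you flag as ``the main obstacle'': the claimed bound $\prod_{i=2}^s\bigl(1+f(x_1,x_i)\bigr)=\exp\bigl(-\sum_{i=2}^s v(x_1-x_i)\bigr)\leq \e^{2B}=u$ is \emph{false} for general stable potentials when the pivot is forced to be the first variable. Stability bounds the total energy $\sum_{i<j}v(x_i-x_j)\geq -sB$; it does not bound the partial sum of interactions with one prescribed vertex. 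For instance, for a potential with a hard core and an attractive tail one may place $x_2,\ldots,x_s$ in the attractive well of $x_1$ while keeping them mutually repulsive, so that $\sum_{i\geq 2}v(x_1-x_i)\approx -(s-1)\eps$ and the product grows exponentially in $s$, even though every configuration is stable. The ``distribution'' argument you invoke does not yield ``each vertex absorbs a factor $\e^{2B}$''; what stability actually gives (by summing $\sum_{i}\sum_{j\neq i}v(x_i-x_j)=2\sum_{i<j}v(x_i-x_j)\geq -2sB$) is only that there \emph{exists at least one} index $i$ with $\sum_{j\neq i}v(x_i-x_j)\geq -2B$, and that index depends on the configuration and need not be $1$.

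Consequently Theorem~\ref{thm:main} as stated, whose operator $\mathbf K_\rho$ pivots on the first variable, cannot be applied directly with your ansatz for general stable $v$ (your argument is fine for $v\geq 0$, where $1+f\leq 1$ makes the prefactor harmless). The paper closes this gap by exploiting that the recursive inequality of Proposition~\ref{prop:fpi} holds for an \emph{arbitrary, configuration-dependent} choice of pivot $\iota$: one chooses $\iota=\iota(x_1,\ldots,x_s)$ satisfying~\eqref{eq:ichoice} and verifies the modified sufficient condition~\eqref{eq:bsuff} involving the permutation operator $\Pi$. Since your ansatz is symmetric, $\Pi m=m$, and your computation then goes through verbatim with the good pivot, whose prefactor is legitimately $\leq u$. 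So the fix is available and the rest of your calculation coincides with the paper's proof, but as written the argument is incomplete without this pivot-selection step.
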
 

\noindent Notice that condition~\eqref{eq:lpsuff} is equivalent to (think $\mu = C\rho \kappa$)
\[
	 \rho \leq \frac{1}{C u} \sup_{\mu>0} \mu(2 \e^{-\mu} - 1) \simeq \frac{0.14477}{Cu}.
\] 
For $u=1$ (non-negative potentials), the right-hand side is precisely the radius of convergence given by Lebowitz and Penrose \cite{lebowitz-penrose1964}.  For $u >1$ ($B>0$), the bound given here is less good than the bound $\sup_{\mu>0} (1+u) \e^{-\mu} - 1)\mu/(Cu^2)$ given in~\cite{lebowitz-penrose1964}. The density expansions of correlation functions are treated as well by Lebowitz and Penrose, see \cite[Section 7]{lebowitz-penrose1964}.

\begin{theorem}[Groeneveld recovered] \label{thm:groe}
Let $v$ be a stable translationally invariant pair potential in $\R^d$ and $\rho>0$. 
	Set $u:= \e^{2 B}$. Assume there exists $\kappa \geq 0$ such that 
	\be \label{eq:groesuff} 
		(1+ u)\e^{C\rho \kappa} - 1\leq \kappa.
	\ee
	Then for all $s\geq 2$ and $x_1,\ldots, x_s\in \R^d$, 
	\[
		\bar g(x_1,\ldots, x_s;\rho) \leq \kappa^{s-1}
	\] 
	and 
	\[
		\bar d(x_1;\rho) \leq - \log \Bigl(1-\frac1\kappa \bigl(\e^{C\rho\kappa}-1\bigr) \Bigr)<\infty.
	\] 	
\end{theorem}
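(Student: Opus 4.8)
The plan is to deduce Theorem~\ref{thm:groe} from the abstract criterion Theorem~\ref{thm:main} by choosing the simplest ansatz compatible with the target bound, namely the translation-invariant weight
\[
	m(x_1,\ldots,x_s):=\kappa^{s-1},\qquad s\in\N,
\]
which is constant on each $(\R^d)^s$ and hence measurable, and for which $m(x_1)=1$ automatically matches the prescribed value $(\mathbf K_\rho m)(x_1)=1$. Note first that~\eqref{eq:groesuff} forces $\kappa\geq u>1$ (bound $\e^{C\rho\kappa}\geq1$ from below in the left-hand side), so $m$ is strictly positive and the ansatz is admissible.

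First I would evaluate $R_\rho m$. Since the measure is $\rho\lambda$ and $v$ is translationally invariant, every single-particle integral factorizes as $\int_{\R^d}|f(x_1,y)|\,\rho\,\dd y=\rho C$, independent of $x_1$. Inserting $m(y_1,\ldots,y_j)=\kappa^{j-1}$ and resumming the exponential series gives
\[
	(R_\rho m)(x_1)=\sum_{j=1}^\infty\frac{\kappa^{j-1}}{j!}(\rho C)^j=\frac{1}{\kappa}\bigl(\e^{C\rho\kappa}-1\bigr),
\]
constant in $x_1$. Rewriting~\eqref{eq:groesuff} as $\e^{C\rho\kappa}\leq(1+\kappa)/(1+u)<1+\kappa$ then yields $(R_\rho m)(x_1)<1$ for every $x_1$, so the geometric factor sums to $\sum_{k\geq0}\bigl((R_\rho m)(x_1)\bigr)^k=\kappa/(1+\kappa-\e^{C\rho\kappa})$.

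The core of the argument is the fixed-point inequality~\eqref{eq:suff1} for $s\geq2$. The prefactor equals $\prod_{i=2}^s(1+f(x_1,x_i))=\exp\bigl(-\sum_{i=2}^s v(x_1-x_i)\bigr)$, and here I would invoke the standard consequence of stability that the one-particle energy is bounded below, $\sum_{i=2}^s v(x_1-x_i)\geq-2B$ for every configuration, which gives $\prod_{i=2}^s(1+f(x_1,x_i))\leq\e^{2B}=u$ (with the usual convention when $v=+\infty$). For the bracketed factor, $m(x_2,\ldots,x_s)=\kappa^{s-2}$ and $m(x_2,\ldots,x_s,y_1,\ldots,y_j)=\kappa^{s-2+j}$, so the same factorization and resummation produce $\kappa^{s-2}+\kappa^{s-2}(\e^{C\rho\kappa}-1)=\kappa^{s-2}\e^{C\rho\kappa}$. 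Multiplying the three factors,
\[
	(\mathbf K_\rho m)(x_1,\ldots,x_s)\leq u\cdot\kappa^{s-2}\e^{C\rho\kappa}\cdot\frac{\kappa}{1+\kappa-\e^{C\rho\kappa}}=\frac{u\,\e^{C\rho\kappa}}{1+\kappa-\e^{C\rho\kappa}}\,\kappa^{s-1},
\]
and the desired inequality $(\mathbf K_\rho m)\leq m=\kappa^{s-1}$ is equivalent to $u\,\e^{C\rho\kappa}\leq1+\kappa-\e^{C\rho\kappa}$, i.e.\ to $(1+u)\e^{C\rho\kappa}-1\leq\kappa$, which is exactly~\eqref{eq:groesuff}.

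Having verified the two hypotheses of Theorem~\ref{thm:main}, its conclusion delivers simultaneously $\bar g(x_1,\ldots,x_s;\rho)\leq m(x_1,\ldots,x_s)=\kappa^{s-1}$ for $s\geq2$ and $\bar d(x_1;\rho)\leq-\log\bigl(1-(R_\rho m)(x_1)\bigr)=-\log\bigl(1-\tfrac1\kappa(\e^{C\rho\kappa}-1)\bigr)$, as claimed. I expect the only genuinely non-routine ingredient to be the stability bound on the single-particle product $\prod_{i=2}^s(1+f(x_1,x_i))$; everything else reduces to the factorization of integrals under translation invariance together with resumming exponential and geometric series. The pleasant feature worth emphasizing is that the fixed-point inequality collapses to precisely the hypothesis~\eqref{eq:groesuff} rather than merely being implied by it, so the criterion is recovered sharply.
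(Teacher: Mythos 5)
Your computations with the ansatz $m(x_1,\ldots,x_s)=\kappa^{s-1}$ --- the evaluation of $R_\rho m$, the resummation of the bracketed factor to $\kappa^{s-2}\e^{C\rho\kappa}$, the geometric series, and the reduction of the resulting inequality to exactly~\eqref{eq:groesuff} --- all agree with the paper's proof. The gap sits in the step you yourself flag as the ``only genuinely non-routine ingredient'': stability does \emph{not} imply $\sum_{i=2}^s v(x_1-x_i)\geq -2B$ for every configuration. Stability gives $\sum_{1\leq i<j\leq s} v(x_i-x_j)\geq -sB$, and summing over the choice of distinguished vertex ($\sum_{i}\sum_{j\neq i}v(x_i-x_j)=2\sum_{i<j}v(x_i-x_j)\geq -2sB$) one only concludes that \emph{some} index $i$ satisfies $\sum_{j\neq i}v(x_i-x_j)\geq -2B$, not the particular index $1$. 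For a stable potential with an attractive part the one-particle energy of vertex $1$ can be of order $-(s-1)\eps$, with $\eps$ the depth of the attractive well: place $x_2,\ldots,x_s$ in a strongly repulsive core configuration (whose large positive energy restores stability) and $x_1$ in the attractive region of all of them. Then $\prod_{i=2}^s\bigl(1+f(x_1,x_i)\bigr)$ is not bounded by $u=\e^{2B}$, your estimate of the prefactor of $\mathbf K_\rho m$ fails, and with it the verification of~\eqref{eq:suff1}. Your argument is correct as written only for $v\geq 0$ (where $1+f\leq 1\leq u$), whereas Theorem~\ref{thm:groe} is claimed for all stable potentials.

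This is precisely why the paper does not verify~\eqref{eq:suff1} directly but the permuted condition~\eqref{eq:bsuff}: one first selects the index $\iota(x_1,\ldots,x_s)$ that stability guarantees to satisfy $\sum_{j\neq \iota}v(x_\iota-x_j)\geq -2B$, and one exploits that Proposition~\ref{prop:fpi} holds with the recursion pivoted at \emph{any} vertex $\iota$, so that the conclusions of Theorem~\ref{thm:main} remain valid when $\mathbf K_\rho$ is applied to $\Pi m$, the weight with the selected variable moved to the front. Since your $m=\kappa^{s-1}$ is symmetric, $\Pi m=m$ and all of your algebra carries over verbatim once the prefactor is $\prod_{j\neq\iota}\bigl(1+f(x_\iota,x_j)\bigr)\leq u$ for the selected $\iota$. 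With this one correction your proof coincides with the paper's.
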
 

\noindent Notice that condition~\eqref{eq:groesuff} is equivalent to (think $\mu = C\rho \kappa$)
\[
	\rho \leq \frac{1}{C} \sup_{\mu>0}\frac{\mu}{(1+u)\e^\mu - 1} =:R_{\mathrm{Groe}}.
\] 
The right-hand side is equal to the radius of convergence given by Groeneveld \cite[\S 4, Corollary 2]{groeneveld1967c}. The bounds for $\bar g$ and $\bar d$ are not explicitly stated in \cite{groeneveld1967c} but should follow from the considerations therein. For non-negative potentials, we may take $B=0$, $u=1$, and a numerical evaluation yields $R_{\mathrm{Groe}} \simeq 0.23196/C$. 
(For non-negative potentials, Groeneveld also states a better bound  $0.27846/C$, but his work~\cite{groeneveld1967c} does not include a proof and the forthcoming work that should have had the proof does not seem to have been published.)

Next we further specialize to non-negative pair potentials. For $\mu>0$, define
\[
	\Psi(\mu):= 1+ \sum_{k=1}^\infty \frac{\mu^k}{k!} \int_{(\mathbb R^d)^k} \prod_{i=1}^k |f(0,y_i)|\prod_{1\leq i < j \leq k}\bigl( 1+ f(y_i,y_j)\bigr) \dd \vect y. 
\] 

\begin{theorem}[Nguyen-Fern{\'a}ndez recovered] \label{thm:NF}
	Let $v$ be a non-negative, translationally invariant pair potential in $\R^d$, and $\rho>0$. 
	Suppose there exists a scalar $\kappa>0$ such that 
	\be \label{eq:NF}
		2\Psi(\rho\kappa) - 1 \leq \kappa. 
	\ee
	Then 	for all $s\in \N$ and $(x_1,\ldots, x_s) \in \mathbb X^s$, we have 
	\[
		\bar g(x_1,\ldots, x_s;\rho) \leq  \kappa^{s-1} \prod_{1\leq i < j \leq s}\bigl( 1+ f(x_i,x_j)\bigr) 
	\] 
	and 
	\[
		\bar d(x_1;\rho) \leq - \log \Bigl( 1- \frac1\kappa \bigl(\Psi(\rho\kappa) - 1\bigr)\Bigr) < \infty. 
	\]
\end{theorem}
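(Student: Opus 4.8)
The plan is to apply Theorem~\ref{thm:main} with the weight function read off from the target bound on $\bar g$, namely
\[
	m(x_1,\ldots,x_s) := \kappa^{s-1} \prod_{1\leq i<j\leq s}\bigl(1+f(x_i,x_j)\bigr), \qquad s\in\N,
\]
so that $m(x_1)=1$ and the asserted inequality $\bar g\leq m$ becomes exactly the conclusion of the theorem. I would first compute $R_\rho m$. By translation invariance of $v$ and of the measure $\rho\lambda$, the shift $y_i\mapsto y_i+x_1$ removes all dependence on $x_1$ from the integral defining $(R_\rho m)(x_1)$, and a term-by-term comparison with the series defining $\Psi$ (taking $\mu=\rho\kappa$) should yield the identity
\[
	(R_\rho m)(x_1) = \tfrac1\kappa\bigl(\Psi(\rho\kappa)-1\bigr)=:r,
\]
independent of $x_1$. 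Since $\Psi\geq 1$ always, the hypothesis $2\Psi(\rho\kappa)-1\leq\kappa$ forces $\kappa\geq 1$ and $r\leq (\kappa-1)/(2\kappa)<1$, so $\sum_{k\geq 0}r^k=(1-r)^{-1}$ converges and the stated bound on $\bar d$ is precisely $-\log(1-r)$.

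The heart of the argument is the fixed-point inequality $(\mathbf K_\rho m)(x_1,\ldots,x_s)\leq m(x_1,\ldots,x_s)$ for $s\geq 2$ (the case $s=1$ being trivial). Inserting the ansatz into $m(x_2,\ldots,x_s,y_1,\ldots,y_j)$, I would split the product of factors $1+f$ over all pairs of the combined list into pairs inside $\{x_2,\ldots,x_s\}$ (giving $P_s:=\prod_{2\leq i<j\leq s}(1+f(x_i,x_j))$), pairs inside $\{y_1,\ldots,y_j\}$, and the cross pairs $\prod_{i=2}^s\prod_{l=1}^j(1+f(x_i,y_l))$. This is where non-negativity of $v$ enters in an essential way: since $1+f=\e^{-v}\in(0,1]$, the cross-pair product is at most $1$ and may simply be discarded. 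After the same shift $y_l\mapsto y_l+x_1$, the surviving integral collapses onto the $\Psi$-series, so the parenthesis $m(x_2,\ldots,x_s)+\sum_j(\cdots)$ in $\mathbf K_\rho m$ is bounded by $\kappa^{s-2}P_s(1+\kappa r)$.

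Finally, combining the prefactor $\prod_{i=2}^s(1+f(x_1,x_i))$—which merges with $P_s$ into the full product $\prod_{1\leq i<j\leq s}(1+f(x_i,x_j))$ appearing in $m$—with the geometric factor $(1-r)^{-1}$, the inequality $\mathbf K_\rho m\leq m$ reduces to the scalar condition $(1+\kappa r)/(1-r)\leq\kappa$, equivalently $2\kappa r\leq\kappa-1$. Since $\kappa r=\Psi(\rho\kappa)-1$, this is literally the hypothesis $2\Psi(\rho\kappa)-1\leq\kappa$, and Theorem~\ref{thm:main} then delivers both conclusions. I expect the only genuine obstacle to be organizational—keeping the threefold factorization of the Boltzmann weights and the index bookkeeping straight—since the analytic input reduces, once the cross terms are dropped via $v\geq 0$, to the single scalar inequality that defines the admissible $\kappa$.
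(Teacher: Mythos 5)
Your proposal is correct and follows essentially the same route as the paper: the identical ansatz $m(x_1,\ldots,x_s)=\kappa^{s-1}\prod_{1\leq i<j\leq s}(1+f(x_i,x_j))$, discarding the cross factors $1+f(x_i,y_l)\leq 1$ via non-negativity of $v$, and reducing $\mathbf K_\rho m\leq m$ to the scalar inequality $\Psi(\rho\kappa)/(1-\tfrac1\kappa(\Psi(\rho\kappa)-1))\leq\kappa$, i.e.\ $2\Psi(\rho\kappa)-1\leq\kappa$. Your computation $(R_\rho m)(x_1)=\tfrac1\kappa(\Psi(\rho\kappa)-1)$ is in fact the correct one (the paper's displayed formula omits the $-1$, an evident typo, since its subsequent inequalities use the same expression you derived).
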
 

\noindent Notice that condition~\eqref{eq:NF} is equivalent to (think $\mu = \rho \kappa$) 
\[
	\rho \leq \sup_{\mu>0} \frac{\mu}{2\Psi(\mu) -1} =:R_{\mathrm{NF}}.
\] 
The right-hand side is precisely the lower bound to the radius of convergence of the virial series proven by Nguyen and Fern{\'a}ndez~\cite[Corollary~4.6]{nguyen-fernandez2020}. For hard spheres in dimension $d=2$, a numerical evaluation yields $R_{\mathrm{NF}}\simeq 0.300224 /C$ \cite[Application~4.9]{nguyen-fernandez2020}. The precise bound on $\bar g$ in Theorem~\ref{thm:NF} is new. 

Turning back to general measure spaces $(\mathbb X,\mathcal X)$, let us address the relation between Theorem~\ref{thm:jkt} and Theorem~\ref{thm:main}. To that aim we first derive a simpler convergence condition, for non-negative pair potentials. We make the ansatz 
\be \label{eq:simplem}
	m(x_1,\ldots, x_s) = \e^{b(x_1)+\cdots + b(x_s)}
\ee
for some measurable function $b:\mathbb X\to \R_+$.
% and define a measure $\mu$ on $\mathbb X$ by 
%$\mu(\dd x) = \e^{b(x)}\rho(\dd x)$. 
Then 
\[
	(R_\rho m)(x_1) = \exp\Bigl( \int_\mathbb X|f(x,y)|\e^{b(y)} \rho(\dd y)\Bigr) -1.
\] 
Using $1+ f\leq 1$, it is easily checked that if 
\be \label{eq:suff2}
   \frac{\exp(\int_\mathbb X|f(x,y)|\e^{b(y)} \rho(\dd y))}{2 - \exp(\int_\mathbb X|f(x,y)|\e^{b(y)} \rho(\dd y))}\leq \e^{b(x)}
\ee
for all $x\in \mathbb X$, with strictly positive denominator, then the sufficient convergence condition~\eqref{eq:suff1} is met. Because of 
\[
	\frac{1}{2 - \e^t}  = \frac{1}{1 - (\e^t - 1)}\geq 1+ (\e^t -1) = \e^t
\] 
for all $t\in [0,\log 2)$, condition~\eqref{eq:suff2} implies 
\be
	\exp\Bigl( 2 \int_\mathbb X|f(x,y)| \e^{b(y)}\rho(\dd y)\Bigr) \leq \e^{b(x)}
\ee
hence 
\[
	\int_{\mathbb X} |f(x,y)| \e^{b(y)} \rho(\dd y) \leq \frac{b(x)}{2}
\]
and $\rho$ satisfies condition~\eqref{eq:suff-jkt} from Theorem~\ref{thm:jkt} with $a(x) = b(x)/2$. 

Put differently, Theorem~\ref{thm:main} applied to the simplest weight function~\eqref{eq:simplem} yields a convergence condition that is \emph{worse} than the condition~\eqref{eq:suff-jkt} from \cite{jansen-kuna-tsagkaro2019}. Condition~\eqref{eq:suff2} however is similar to the Lebowitz-Penrose condition from Theorem~\ref{thm:LP}. Theorems~\ref{thm:groe} and~\ref{thm:NF} suggest that criteria better than Theorem~\ref{thm:jkt} might be obtained from our Theorem~\ref{thm:main}  with other weight functions $m$ also in the inhomogeneous case, but this is beyond the purpose of the present work. 

\section{Recurrence relations for $2$-connected graphs} \label{sec:recurrence}

Remember the set $\mathcal D_n$ of $2$-connected graphs with vertex set $[n]=\{1,2,\ldots,n\}$. The graph $G$ with vertices $1,2$ and edge $\{1,2\}$ is considered $2$-connected. 
For $n\geq 2$, let
\[
	D_{n}(x_1,x_2,\ldots, x_{n}) = \sum_{G\in \mathcal D_{n}} w(G;x_1,\ldots,x_n).
\] 
Further remember the weights $\psi((x_i)_{i\in I}, (x_j)_{j\in J})$ from~\eqref{eq:psidef} and~\eqref{eq:psi0} and the graphs $\mathcal D(I,J)$ defined in Section~\ref{sec:main}. To lighten notation, when there is no risk of confusion we use the shorthand 
\[
	\psi(I, J):= \psi((x_i)_{i\in I}, (x_j)_{j\in J}),\quad D(I) = D_{\#I}( (x_i)_{i\in I}). 
\] 
Let $\mathcal P(V)$ be the collection of set partitions $\{V_1,\ldots, V_m\}$ of $V$ (non-empty disjoint sets with union $V$, order irrelevant). 

\begin{prop} \label{prop:2conn}
	For all $n\geq 2$ and $(x_1,\ldots, x_n) \in \mathbb X^n$,
	\begin{multline*}
		D(\{1,\ldots, n\}) = \sum_{L\subset \{2,\ldots,n\}} \prod_{\ell \in L} f(x_1,x_\ell) \sum_{m=1}^n (-1)^{m-1} (m-1)! \\
			\times\sum_{\{V_1,\ldots,V_m\}\in\mathcal P(\{2,\ldots,n\})} \prod_{r=1}^m \psi\bigl( L \cap V_r, (\{2,\ldots,n\}\setminus L)\cap V_r\bigr).
	\end{multline*} 
\end{prop}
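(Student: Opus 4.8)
The plan is to combine the combinatorial reduction of Lemma~\ref{lem:combi2conn} with a Möbius inversion on the partition lattice that expresses the ``connected part'' of $\psi$ in terms of $\psi$ itself. The case $n=2$ I would check by hand: there $D(\{1,2\})=f(x_1,x_2)$, while the right-hand side collapses to the single term $L=\{2\}$, $m=1$, with coefficient $1$ and $\psi(\{2\},\varnothing)=1$ by~\eqref{eq:psi0}; the term $L=\varnothing$ vanishes since $\psi(\varnothing,\{2\})=0$. So assume $n\geq 3$.

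First I would introduce the connected analogue of $\psi$, namely $\psi^c(W,B):=\sum_{G\in\mathcal D(W,B),\ G\text{ connected}} w(G)$, and rephrase Lemma~\ref{lem:combi2conn} at the level of weights. Organizing the sum over $2$-connected $G\in\mathcal D_n$ by first fixing the neighborhood $W=L$ of vertex $1$ (which contributes $\prod_{\ell\in L}f(x_1,x_\ell)$) and then summing over the admissible graphs $G'$ on $\{2,\ldots,n\}$, the lemma yields
\[
	D(\{1,\ldots,n\}) = \sum_{\substack{L\subset\{2,\ldots,n\}\\ \#L\geq 2}} \Bigl(\prod_{\ell\in L} f(x_1,x_\ell)\Bigr)\, \psi^c\bigl(L,\{2,\ldots,n\}\setminus L\bigr).
\]
The constraint $\#L\geq 2$ may then be dropped: for $\#L\leq 1$ and $n\geq 3$ the set $\{2,\ldots,n\}\setminus L$ is a non-empty all-black part, so $\mathcal D(L,\{2,\ldots,n\}\setminus L)$ contains no connected graph and $\psi^c=0$.

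The heart of the proof is the ``exponential formula'' relating $\psi$ and $\psi^c$. The key structural fact is that $G\in\mathcal D(W,B)$ if and only if each connected component of $G$ carries at least one white vertex and, as an induced subgraph, is a connected member of $\mathcal D(W',B')$ for the white/black vertices $W',B'$ it contains. The nontrivial point is that defining property (ii) is local to components: removing a vertex $v$ only affects connectivity inside the component of $v$, so $G-v$ preserves property (i) if and only if each component subgraph separately does. Granting this, decomposing any $G\in\mathcal D(W,B)$ into components and recording the induced set partition of $W\cup B$ gives
\[
	\psi(W,B) = \sum_{\pi\,\vdash\,(W\cup B)} \prod_{P\in\pi} \psi^c(P\cap W, P\cap B),
\]
where the convention $\psi^c(\varnothing,\cdot):=0$ makes partitions with an all-black block drop out automatically, matching the requirement that every component carry a white vertex.

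Finally I would apply Möbius inversion on the partition lattice. Since the previous identity holds for every sub-coloring, the functions $P\mapsto\psi(P\cap W,P\cap B)$ and $P\mapsto\psi^c(P\cap W,P\cap B)$ form a zeta/Möbius pair on $\Pi_{\{2,\ldots,n\}}$, and with $\mu(\pi,\hat 1)=(-1)^{m-1}(m-1)!$ for $\pi$ with $m$ blocks one gets
\[
	\psi^c\bigl(L,\{2,\ldots,n\}\setminus L\bigr) = \sum_{m\geq 1}(-1)^{m-1}(m-1)! \sum_{\{V_1,\ldots,V_m\}\,\vdash\,\{2,\ldots,n\}} \prod_{r=1}^m \psi\bigl(L\cap V_r,(\{2,\ldots,n\}\setminus L)\cap V_r\bigr).
\]
Substituting this into the displayed formula for $D(\{1,\ldots,n\})$, and extending the inner sum harmlessly from $m\leq n-1$ to $m\leq n$ (the extra term is an empty sum), yields exactly the claimed identity. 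I expect the main obstacle to be a clean proof of the component-decomposition equivalence, in particular the locality of property (ii) under vertex removal and the correct bookkeeping for singleton white blocks, rather than the Möbius inversion itself, which is then routine.
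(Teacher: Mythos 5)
Your proposal is correct and follows essentially the same route as the paper's proof: Lemma~\ref{lem:combi2conn} yields the sum over neighborhoods $L$ of vertex $1$ weighted by the connected part $\psi_c$, the connected-component decomposition gives the partition-product identity, and M{\"o}bius inversion on the partition lattice (with $\mu(\pi,\hat 1)=(-1)^{m-1}(m-1)!$) converts $\psi_c$ back into $\psi$ before substituting. Your additional care with the edge cases (the $n=2$ base case, dropping the constraint $\#L\geq 2$, the locality of property (ii) under vertex removal, and the conventions for singleton-white and all-black blocks) only makes explicit what the paper leaves implicit.
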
 

\begin{proof} 
	If $n =2$, then $D(\{1,2\}) = f(x_1,x_2)$ and the proposition is trivial. 
	Consider $n\geq 3$. Let $G\in \mathcal G_n$, $L\subset \{2,\ldots,n\}$ the set of vertices adjacent to $1$, and $G'$ the graph with vertex set $\{2,\ldots, n\}$ obtained from $G$ by deleting the vertex $1$ and all incident edges. By Lemma~\ref{lem:combi2conn}, $G$ is in $\mathcal D_n$ if and only if $G'$ is in $\mathcal D(L, \{2,\ldots, n\}\setminus L)$ and $G'$ is connected. Define 
	\[
		\psi_c(I,J) = \sum_{\substack{G\in \mathcal D(I,J)\\ G\text{ connected}}} w(G;(x_i)_{i\in I\cup J}),
	\]
	then 
	\be \label{eq:D1}
		D(\{1,\ldots,n\}) = \sum_{L\subset \{2,\ldots,n\}} \prod_{\ell \in L} f(x_1,x_\ell) \psi_c\bigl(L,\{2,\ldots,n\}\setminus L\bigr).
	\ee
%	On the other hand, summing over connected component of graphs $G\in \psi(I,J)$, one finds 
%	\[
%		\psi(I,J) = \sum_{m\geq 1}\sum_{\{V_1,\ldots, V_m\}} \prod_{r=1}^m \psi_c( I\cap V_r,J\cap V_r)
%	\] 
%	where the sum is over set partitions of $I\cup J$ and we define $\psi(\varnothing, B) =0$ for all $B$. 
	On the other hand, for every $J\subset \{2,\ldots, n\}$,  summing over connected component of graphs $G\in \psi(L,J)$, one finds 
	\[
		\psi(L,J) = \sum_{m\geq 1}\ \sum_{\{V_1,\ldots, V_m\}\in \mathcal P(L\cup J)}\ \prod_{r=1}^m \psi_c( L\cap V_r,J\cap V_r)
	\] 
	where  we define $\psi(\varnothing, B) =0$ for all $B$. Define $b_L(V):= \psi(L\cap V, V\setminus L)$ and $a_L(V) = \psi_c(L\cap V,V\setminus L)$, 
	then we may rewrite the previous equation as 
	\[
		b_L( L\cup J) = \sum_{m\geq 1}\ \sum_{\{V_1,\ldots, V_m\}\in \mathcal P(L\cup J)} a_L(V_1)\cdots a_L(V_m).
	\] 
	A standard formula for the M{\"o}bius inversion on the lattice of set partitions \cite[Section 6.2]{malyshev-minlos-book} then yields 
	\[
		a_L(L\cup J) = \sum_{m\geq 1} (-1)^{m-1} (m-1)! \sum_{\{V_1,\ldots, V_m\}\in \mathcal P(L\cup J)} b_L(V_1)\cdots b_L(V_m)
	\] 
	which gives 
	\be \label{eq:D2}
		\psi_c( L,J) = \sum_{m\geq 1}(-1)^{m-1} (m-1)! \sum_{\{V_1,\ldots, V_m\}\in \mathcal P(L\cup J)} \prod_{r=1}^m \psi( L\cap V_r,J\cap V_r).
	\ee
	The proposition follows by inserting~\eqref{eq:D2} into~\eqref{eq:D1}.
\end{proof} 

\begin{prop} \label{prop:recursion}  
 	Let $I$ and $J$ be two non-empty disjoint finite sets with $\#I \geq 2$, and $x_i$, $i\in I\cup J$,  elements in $\mathbb X$. Pick $\iota(I) \in I$ and set $I':= I\setminus \iota(I)$. Then 
 	\begin{multline*}
 		\psi(I;J) =  \prod_{i\in I'} \bigl(1+ f(x_{\iota(I)}, x_i)\bigr) \Biggl( \psi(I';J) + \sum_{\substack{L\subset J:\\ L \neq \varnothing}} \prod_{\ell \in L} f(x_{\iota (I)}, x_\ell)\\
 			\times  \sum_{m\geq 1} (-1)^{m-1} \sum_{\substack{(L_1,\ldots,L_m)\\ (J_1,\ldots, J_m)}} \psi(I'\cup L_1;J_1)\psi(L_2;J_2)\cdots \psi(L_m;J_m)\Biggr),
 	\end{multline*}
 	where the last sum is over tuples such that 
 	\begin{itemize} 
 		\item $L_1,\ldots, L_m$ are pairwise disjoint with $L_1$ possibly empty but $L_2,\ldots, L_m$ non-empty, and $L = L_1\cup \cdots\cup L_m$. 
 		\item $J_1,\ldots, J_m$ are pairwise disjoint with $J_1\cup \cdots\cup J_m = J\setminus L$. Each $J_i$ is allowed to be empty.
 	\end{itemize} 
\end{prop}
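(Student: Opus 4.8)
The plan is to peel the vertex $\iota := \iota(I)$ off the graphs contributing to $\psi(I;J)$, mirroring the removal of vertex $1$ in the proof of Proposition~\ref{prop:2conn}. The first observation is that membership of a graph in $\mathcal D(W,B)$ is unaffected by edges joining two \emph{white} vertices: any path from a black vertex to the white set can be truncated at the first white vertex it meets, so reachability of the white set (before or after deleting any single vertex) never requires a white--white edge. Applying this with $W=I$, I can sum freely over the edges between $\iota$ and $I'=I\setminus\iota$, which produces the prefactor $\prod_{i\in I'}(1+f(x_\iota,x_i))$ and reduces the claim to an identity for $\Phi:=\sum w(G^*)$, the sum over graphs $G^*\in\mathcal D(I,J)$ in which $\iota$ is joined only to black vertices. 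Conditioning on the set $L\subseteq J$ of black neighbours of $\iota$ gives $\Phi=\sum_{L\subseteq J}\prod_{\ell\in L}f(x_\iota,x_\ell)\,\Theta_L$, where $\Theta_L:=\sum_H w(H)$ runs over graphs $H$ on $I'\cup J$ for which $G^*=H\cup\{\{\iota,\ell\}:\ell\in L\}$ lies in $\mathcal D(I,J)$.

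The heart of the argument is the characterisation of the admissible $H$. Because $\iota$ is adjacent precisely to $L$, the vertex $\iota$ simultaneously links every $\ell\in L$ to the white set and ties all of $L$ together; using the truncation principle above, a black vertex reaches the white set of $G^*$ (with or without a further vertex removed) if and only if it reaches $I'\cup L$ in $H$ (minus that vertex), while deleting $\iota$ itself leaves $G^*\setminus\iota=H$ with white set $I'$. Spelling out conditions~(i) and~(ii) for $G^*$ vertex by vertex — the deletion of $\iota$ forcing every black vertex to reach $I'$, the remaining deletions reproducing condition~(ii) for $H$ relative to the enlarged white set $I'\cup L$ — I expect to obtain: $G^*\in\mathcal D(I,J)$ if and only if $H\in\mathcal D(I'\cup L,\,J\setminus L)$ \emph{and} every connected component of $H$ contains a vertex of $I'$. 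Thus $L$ is promoted to white in the bookkeeping, and the extra constraint inherited from deleting $\iota$ is that no component is anchored to $L$ alone. This characterisation, together with the degenerate cases $\#I'=1$ and $L=\varnothing$ (where the singleton conventions for $\mathcal D(\{w\},B)$ and the convention $\psi(\varnothing,B)=0$ must be verified by hand), is the step I expect to be the main obstacle, since this is where the combinatorial content really lives.

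It then remains to evaluate $\Theta_L=\Theta(I'\cup L;J\setminus L)$, where $\Theta(I'\cup L_1;J_1)$ denotes the generating function of graphs on $(I'\cup L_1)\cup J_1$ all of whose components meet $I'$. I would split each \emph{unconstrained} graph in $\mathcal D(I'\cup L,J\setminus L)$ into its \emph{good} components (those meeting $I'$) and its \emph{bad} components (those whose white vertices lie entirely in $L$); since distinct components share no edges, the weight factorises and one obtains the subset-convolution identity
\[
	\psi(I'\cup L;\,J\setminus L)=\sum_{\substack{L_1\subseteq L\\ J_1\subseteq J\setminus L}}\Theta(I'\cup L_1;J_1)\,\psi\bigl(L\setminus L_1;\,(J\setminus L)\setminus J_1\bigr).
\]
Writing $\psi_{I'}(S;T):=\psi(I'\cup S;T)$, this reads $\psi_{I'}=\Theta*\psi$ in the convolution algebra of functions on pairs of subsets of $L$ and $J\setminus L$, so $\Theta=\psi_{I'}*\psi^{-1}$. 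Decomposing $\psi=\delta+h$, with $\delta$ the convolution unit (the empty configuration, of weight $1$) and $h(S;T)=\psi(S;T)$ for $(S,T)\neq(\varnothing,\varnothing)$ supported on $S\neq\varnothing$ (because $\psi(\varnothing;T)=0$ for $T\neq\varnothing$), gives the finite signed inverse $\psi^{-1}=\sum_{k\geq0}(-1)^k h^{*k}$, whence $\Theta=\psi_{I'}*\sum_{k\geq0}(-1)^k h^{*k}$. Unfolding the convolution with $m=k+1$ and naming the successive blocks $(L_1,J_1),(L_2,J_2),\ldots,(L_m,J_m)$ yields exactly the stated sum: the support of $h$ forces $L_2,\ldots,L_m\neq\varnothing$, the sign $(-1)^k=(-1)^{m-1}$ appears, and the term $L=\varnothing$ collapses to the isolated summand $\psi(I';J)$. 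The outstanding verifications are routine: that each factorisation is weight-preserving and that the empty-block conventions agree with those fixed in Section~\ref{sec:recurrence}.
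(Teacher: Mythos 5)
Your proposal is correct, and its first half is the same as the paper's: you peel off $\iota(I)$, use the fact that white--white edges never affect membership in $\mathcal{D}(W,B)$ to produce the prefactor (this is how the paper arrives at Eq.~\eqref{eq:rekey1}), and characterize the admissible remainders $H$ as graphs in $\mathcal{D}(I'\cup L, J\setminus L)$ all of whose components meet $I'$ --- an equivalent reformulation of the paper's class $\widehat{\mathcal D}(I';L;J\setminus L)$ (``every $j\in J$ connects to $I'$''). The step you flag as the main obstacle does go through exactly by the case analysis you sketch: deleting $\iota$ forces every vertex of $J$ to reach $I'$ in $H$, while for any other deleted vertex one reroutes a path of $G^*$ through $\iota$ via its last vertex before $\iota$, which necessarily lies in $L$. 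Where you genuinely differ is the inversion step. The paper introduces connected weights $\psi_c$ (with all of $I'$ identified to a single vertex when $\#I'\geq 2$), expands $\psi$ over set partitions into products of $\psi_c$, applies M{\"o}bius inversion on the partition lattice with the factor $(-1)^{m-1}(m-1)!$ as in Eq.~\eqref{eq:rekey3}, and then cancels the $(m-1)!$ by singling out the block containing $I'$ and ordering the remaining blocks; the extension to $\#I'\geq 2$ is only sketched there and left to the reader. You instead split off the ``bad'' components (those not meeting $I'$), obtain the subset-convolution identity $\psi_{I'}=\Theta*\psi$, and invert by the finite Neumann series $\sum_{k\geq 0}(-1)^k h^{*k}$. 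This yields the ordered tuples, the sign $(-1)^{m-1}$, and the constraint $L_2,\ldots,L_m\neq\varnothing$ (the support of $h$) in one stroke, with no factorial to cancel, no notion of connectivity relative to a merged $I'$, and no case distinction on $\#I'$; it also makes the $L=\varnothing$ term transparent, since the component constraint is vacuous when the white set is exactly $I'$, and $\Theta_\varnothing=\psi(I';J)=0$ in the degenerate case $\#I'=1$, $J\neq\varnothing$, matching the paper's conventions. What the paper's route buys is uniformity with Proposition~\ref{prop:2conn}, which relies on the same partition-lattice inversion and the standard cited formula; what yours buys is a uniform and arguably cleaner argument that in fact supplies the detail the paper omits for $\#I'\geq 2$. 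The remaining verifications you list (the disjoint-union compatibility of $\mathcal D(W,B)$, weight factorization, and the conventions $\psi(\varnothing;\varnothing)=1$, $\psi(\varnothing;T)=0$ for $T\neq\varnothing$) are indeed routine and all check out.
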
 

The proposition is analogous to \cite[Eq.~(5.9)]{penrose1963},  \cite[Lemma 6.2]{poghosyan-ueltschi2009} and~\cite[Lemma 4.1]{jansen2019clustergibbs} which give a recurrence relation for the coefficients of the activity expansion of the correlation functions.

\begin{proof} 
	We prove first a simpler formula with a different weight function $\widehat \psi$ (Eq.~\eqref{eq:rekey1}), and then we express $\widehat \psi$ with the help of an inverse M{\"o}bius transform (Eq.~\eqref{eq:keyrec4}). 
	
   Let $G\in \mathcal D(I,J)$ and $L\subset J$ the set of black vertices incident to the white vertex $1$. Removing the vertex $1$ and all incident vertices from $G$ results in a graph  $G' \in \mathcal D(I'\cup L, J\setminus L)$. 
 The graph $G'$ necessarily has the following property: every black vertex in $J$ connects to $I'$. Indeed, if it didn't, then in $G$ every path linking the black vertex to a white vertex would pass through $1$ hence removal of the white vertex $1$ would destroy the defining property (i) of $\mathcal D(I,J)$, contradicting $G\in \mathcal D(I,J)$. Thus we have checked that if $G$ is in $\mathcal D(I,J)$, then $G'$ is in the set $\widehat{ \mathcal D}(I';L;J\setminus L)$ given by 
\[
	\widehat{ \mathcal D}(I';L;J\setminus L) 
	  = \bigl\{ G' \in \mathcal D(I'\cup L;J\setminus L):\, \text{every }j\in J\text{ connects to } I'\}.
\] 
Conversely, if $G'$ is a graph in $\widehat{ \mathcal D}(I';L;J\setminus L)$, then adding the vertex $\iota(I)$, the edges $\{1,\ldots, \ell\}$ and none, some, or all of the edges $\{\iota(I),i\}$ with $i\in I'$, gives a graph in $G'$. Define 
\[
	\widehat \psi(I', L, J\setminus L) = \sum_{G'\in\widehat{ \mathcal D}(I', L, J\setminus L)} w\bigl( G; \vect x_{I'\cup L \cup J}\bigr).  
\] 
The considerations above yield
\be \label{eq:rekey1}
	 		\psi(I;J) =  \prod_{i\in I'} \bigl(1+ f(x_{\iota(I)}, x_i)\bigr) \sum_{L\subset J} \prod_{\ell \in L} f(x_{\iota (I)}, x_\ell)	\widehat \psi(I', L, J\setminus L),
\ee
the contribution from $L = \varnothing$ is $\widehat \psi(I',J)$. 

Next we express the $\widehat \psi$-weights for $L\neq \varnothing$ in terms of the $\psi$-weights with an inverse M{\"o}bius transform. To that aim, keeping $I'$ fixed, we define a new set of graphs that corresponds, roughly, to the connected components of graphs $G\in \mathcal D(I', L, J\setminus L)$. For $W,B$ disjoint finite sets (white and black vertices), define
\[
	\mathcal D_c(W,B):= \begin{cases} 
					\widehat{\mathcal D}(I',W\setminus I',B), &\quad I'\subset W,\\
					\{G'\in \mathcal D(W,B):\ G'\text{ is connected}\}, &\quad I'\cap W = \varnothing,\\
					\varnothing, &\quad \text{else}.
			\end{cases} 
\] 
The case distinction in the definition of $\mathcal D_c(W,J)$ is only needed for $\#I'\geq 2$; when $\#I' =1$ and $W\supset I'$, a graph $G'\in \mathcal D(W,B)$ is in $ \widehat{\mathcal D}(I',W\setminus I', B)$ if and only if it is connected. Further set 
\[
	\psi_c(W,B) = \sum_{G'\in \mathcal D_c(W,B)} w(G;\vect x_{W\cup B})
\] 
and notice that for all $W\supset I'$, 
\be \label{eq:rekey10} 
	\psi_c(W,B) = \widehat \psi(I',W\setminus I', B).
\ee
Consider first the simpler case when $I'$ has cardinality $1$. Let $W$ and $B$ be two finite disjoint sets and $G'$ a graph in $\mathcal D(W,B)$. The graph $G'$ splits into connected components $G'_1,\ldots, G'_m$ with vertex sets $V_1,\ldots, V_m$. Notice $V_1\cup \cdots \cup V_m = W\cup B$. 
Set 
 \[
 	M_k:= V_k\cap M,\quad B_k := V_k \cap B\quad (k=1,\ldots,m).  
 \]
It is easily checked that each $G'_k$ is in $\mathcal D(W_k,B_k)$ for all $k$. In addition, each $B_k$ is non-empty. Thus we find 
\be \label{eq:rekey2}
	\psi(W,B) = \sum_{m=1}^{\#(W\cup B)} \sum_{\{V_1,\ldots, V_m\}} \prod_{k=1}^m \psi_c(V_k\cap W, V_k\cap B)
\ee 
where the sum is over set partitions of $W\cup B$ and we set $\psi(W',B') = \varnothing$ if $W' = \varnothing$. Using the well-known formula for the inverse M{\"o}bius transform on the lattice of set partitions, see e.~g.\ \cite[Chapter 2.6]{malyshev-minlos-book}, we deduce
\be \label{eq:rekey3}
	\psi_c(W,B) = \sum_{m=1}^{\#(W\cup B)} (-1)^{m-1} (m-1)! \sum_{\{V_1,\ldots, V_m\}} \prod_{k=1}^m \psi(V_k\cap W, V_k\cap B)
\ee
with summation over set partitions of $W\cup B$. We apply the formula to $W=I'\cup L$ and $B=J\setminus L$ with disjoint $I'$ and $J$, and $L\subset J$. In view of~\eqref{eq:rekey10}, the left side of~\eqref{eq:rekey2} is $\widehat \psi(I',L,J\setminus L)$. On the right side of~\eqref{eq:rekey10}, we get rid of the factorial $(m-1)!$ by ordering the blocks of the partitions. The uniquely defined block containing the singleton $I'$ is labelled $V_1$. There are $(m-1)!$ ways to order the remaining blocks, we obtain 
\begin{multline*}
	\widehat \psi(I',L,J\setminus L) = \sum_{m\geq 1} (-1)^{m-1} \\
		\times   \sum_{(V_1,\ldots, V_m)} \psi(V_1\cap (I'\cup L), V_1\cap (J\setminus L)) \prod_{k=2}^m \psi\bigl(V_k \cap L, V_k \cap (J\setminus L)\bigr)
\end{multline*} 
with summation over \emph{ordered} set partitions of $W\cup  B= I'\cup J$ such that $V'_1\supset I'$. 
The correspondence $L_k = V_k \cap L$, $J_k = V_k \cap J$ allows us to rewrite the previous equation as 
\be \label{eq:keyrec4}
	\widehat \psi(I',L,J\setminus L) = \sum_{m\geq 1} (-1)^{m-1} 
		\sum_{\substack{(L_1,\ldots, L_m)\\(J_1,\ldots, J_m)}} \widehat \psi(I'\cup L_1, J_1) \prod_{k=2}^m \psi(L_k,J_k)
\ee
with the domain of summation formulated as in the lemma.

A similar reasoning shows that~\eqref{eq:keyrec4} stays valid when $\#I\geq 2$. Instead of connected components of graphs $G'$, one needs to look at connected components of the graph obtained by identifying all vertices of $I'$. The details are left to the reader. 

The proposition follows by inserting~\eqref{eq:keyrec4} into~\eqref{eq:rekey1}. 
\end{proof} 

\section{Inductive proof of convergence} \label{sec:mainproof}

The recurrence relation from Proposition~\ref{prop:recursion} allows for an inductive proof of Theorem~\ref{thm:main}. The induction is over the maximum number $N$ of black vertices, as in~\cite{ueltschi2004}. Let
\begin{multline} \label{eq:hbardef}
	\bar g^{(N)}_\rho( x_1,\ldots, x_s):= |\psi(x_1,\ldots,x_s;\varnothing)| \\
	+ 	\sum_{n=1}^N \frac{1}{n!}\int_{\mathbb X^n} \bigl|\psi(x_1,\ldots, x_s; y_{s+1},\ldots, y_{s+n})\bigr| \rho^n(\dd \vect y).
\end{multline} 
Notice 
\[
	\bar g_\rho^{(0)} (x_1,\ldots,x_s) = \prod_{1\leq i<j\leq s}\bigl( 1+ f(x_i,x_j)\bigr)
\] 
because of~\eqref{eq:psi0}, moreover 
\[
	\bar g_\rho^{(N)}(x_1) =1\quad (N\in \N).
\] 
The recursive relation from Proposition~\ref{prop:recursion} translates into a recursive inequality for the partial sums. 

\begin{prop} \label{prop:fpi}
For all $N\in \N$, $s\geq 2$, and every $\iota\in \{1,\ldots,s\}$, 
\begin{multline} \label{eq:barkirk} 
	\bar g_{\rho}^{(N+1)} (x_1,\ldots,x_s) 
 \leq  \prod_{\substack{1\leq j\leq s:\\ j \neq \iota}} \bigl( 1+ f(x_\iota,x_j)\bigr)\Biggl( \bar g_{\rho}^{(N)} ( (x_j)_{j\neq\iota} ) \\
		    + \sum_{k=1}^\infty \frac{1}{k!}\int_{\mathbb X^k}\prod_{i=1}^k |f(x_\iota,y_i)| \bar g_{\rho}^{(N)} ( (x_j)_{j\neq \iota}, y_1,\ldots, y_k)\rho^k(\dd \vect y)\Biggr)  \\
		\times \Biggl\{ 1 - \sum_{k=1}^\infty \frac{1}{k!}\int_{\mathbb X^k}\prod_{i=1}^k |f(x_\iota,y_i)| \bar g_{\rho}^{(N)} (y_1,\ldots, y_k)\rho^j(\dd \vect y)\Biggr\}^{-1}
\end{multline} 
with the convention $1/(1-q)=\infty$ when $q\geq 1$. 
\end{prop}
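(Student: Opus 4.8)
The plan is to derive \eqref{eq:barkirk} from the vertex-removal identity of Proposition~\ref{prop:recursion}, applied with the choice $\iota(I)=\iota$, by passing to absolute values and then resumming the integrated (black) vertices. Throughout one uses that $1+f(x,y)=\e^{-v(x,y)}\ge 0$, so that the prefactor $\prod_{j\ne\iota}\bigl(1+f(x_\iota,x_j)\bigr)$ is non-negative and may be pulled out of any absolute value without change of sign.

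First I would fix $n$ with $0\le n\le N+1$, write $J=\{s+1,\dots,s+n\}$, and apply Proposition~\ref{prop:recursion} to $\psi(\{1,\dots,s\};J)$ with $I'=\{1,\dots,s\}\setminus\{\iota\}$. Taking absolute values and using the triangle inequality on the alternating sum $\sum_{m\ge1}(-1)^{m-1}(\cdots)$ removes the signs, bounding $|\psi(\{1,\dots,s\};J)|$ by the prefactor times a sum of products $\prod_{\ell\in L}|f(x_\iota,x_\ell)|\,\bigl|\psi(I'\cup L_1;J_1)\bigr|\prod_{k\ge2}\bigl|\psi(L_k;J_k)\bigr|$ over the admissible ordered tuples. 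The distinguished factor $\psi(I'\cup L_1;J_1)$ is the \emph{main} cluster (it contains all of $I'$), its promoted white roots $L_1$ carry the weights $|f(x_\iota,\cdot)|$, and the factors $\psi(L_k;J_k)$ with $k\ge2$ are the \emph{extra} clusters, each rooted at a non-empty set $L_k$ of former black vertices adjacent to $\iota$.

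Next I would integrate against $\tfrac1{n!}\rho^n(\dd\vect y)$ and sum over $n$. Since the integration variables are exchangeable, the multinomial coefficient counting the allocations of the $n$ labelled black vertices to the slots $(L_1,J_1,L_2,J_2,\dots)$ cancels the factor $1/n!$ and factorises the expression into independent per-slot integrals, each carrying its own reciprocal factorial. Summing the main slot over its internal set $J_1$ reconstructs a $\bar g$-factor in the white variables $(x_j)_{j\ne\iota}$ together with the promoted roots $L_1$ integrated against $\prod_i|f(x_\iota,y_i)|$; the cases $L_1=\varnothing$ and $L_1\ne\varnothing$ yield, respectively, the two summands of the middle factor in \eqref{eq:barkirk}. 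Each extra slot reconstructs one copy of $R:=\sum_{k\ge1}\tfrac1{k!}\int\prod_i|f(x_\iota,y_i)|\,\bar g_\rho^{(N)}(\vect y)\,\rho^k(\dd\vect y)$, and because the extra clusters are \emph{ordered}, summing over their number $m-1\ge0$ produces the geometric series $\sum_{m\ge1}R^{\,m-1}=\{1-R\}^{-1}$, with the stated convention covering $R\ge1$.

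The step I expect to be the main obstacle is the bookkeeping of the truncation level: the left-hand side is truncated at $N+1$ black vertices, while every $\bar g$ on the right must carry the truncation $N$. Whenever $\iota$ has at least one black neighbour---so that some black vertex is promoted to a root or split off into an extra component---the internal black count of each cluster is automatically at most $N$, and the resummation delivers $\bar g_\rho^{(N)}$-factors directly. The difficulty is concentrated in the complementary, single-cluster contribution in which $\iota$ is adjacent to no integrated vertex and the decomposition degenerates to $\psi(I';J)$ with up to $N+1$ internal black vertices; the naive estimate would here reconstruct $\bar g_\rho^{(N+1)}\bigl((x_j)_{j\ne\iota}\bigr)$ in place of the claimed $\bar g_\rho^{(N)}\bigl((x_j)_{j\ne\iota}\bigr)$. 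Verifying that this excess contribution is absorbed once the middle factor and the geometric series $\{1-R\}^{-1}$ are accounted for together is the technical heart of the argument and the point demanding the most care.
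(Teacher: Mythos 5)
Your route is the same as the paper's: apply Proposition~\ref{prop:recursion} with distinguished vertex $\iota$, take absolute values and the triangle inequality in the alternating sum, then resum the integrated black vertices so that the main cluster reproduces the middle factor of~\eqref{eq:barkirk} and the ordered extra clusters produce the geometric series. That resummation step (multinomial cancellation of $1/n!$, per-slot factorials, $\sum_{m\geq 1}R^{m-1}=(1-R)^{-1}$) is correct and is exactly what the paper does via its truncated weights $\psi^{(N)}$ and the quantities $g_n$. The genuine gap is the step you yourself flag and then defer: you never prove that the excess from the single-cluster contribution (the case where $\iota$ has no integrated neighbour, which reconstructs $\bar g^{(N+1)}_{\rho}((x_j)_{j\neq\iota})$ rather than $\bar g^{(N)}_{\rho}((x_j)_{j\neq\iota})$) is ``absorbed'' by the other terms. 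This absorption is not merely unproven; it is false, so no amount of care can close the gap in this form. Indeed, take $s\geq 3$ and a configuration with $f(x_\iota,x_j)=0$ for all $j\neq\iota$ and $f(x_\iota,y)=0$ for $\rho$-a.e.\ $y$ (finite-range potential, $x_\iota$ far from everything). Then every graph with an edge at $\iota$ has weight zero, and deleting the isolated vertex $\iota$ is a weight-preserving bijection, so the left-hand side of~\eqref{eq:barkirk} equals $\bar g^{(N+1)}_{\rho}((x_j)_{j\neq\iota})$; on the right-hand side all integrals involving $|f(x_\iota,\cdot)|$ vanish, the prefactor and the brace factor equal $1$, and the right-hand side collapses to $\bar g^{(N)}_{\rho}((x_j)_{j\neq\iota})$. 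Since $\bar g^{(N+1)}_{\rho}\geq \bar g^{(N)}_{\rho}$, with strict inequality whenever $\int_{\mathbb X^{N+1}}|\psi((x_j)_{j\neq\iota};y_1,\ldots,y_{N+1})|\,\rho^{N+1}(\dd\vect y)>0$ (for hard spheres with $x_2,x_3$ at distance $1.5D$ and $\rho$ charging the lens between them one has $\psi(x_2,x_3;y_1)=f(x_2,y_1)f(x_3,y_1)\neq 0$ and $\psi(x_2,x_3;y_1,y_2)=2\neq 0$ there), the inequality~\eqref{eq:barkirk} fails.

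Two remarks to put this in perspective. First, what your computation actually proves is~\eqref{eq:barkirk} with $\bar g^{(N+1)}_{\rho}((x_j)_{j\neq\iota})$ in the first slot of the middle factor --- precisely what your ``naive estimate'' delivers --- and that corrected statement is the one that should be recorded. Second, the paper's own proof has the same defect: it asserts $g_n\geq |\psi^{(N+1)}(\{1,\ldots,s\};\{s+1,\ldots,s+n\})|$ as a consequence of Proposition~\ref{prop:recursion}, but at $n=N+1$ the first term of $g_n$ is $|\psi^{(N)}(I';\cdot)|=0$, whereas the recursion produces the untruncated $|\psi(I';\cdot)|$; the configuration above shows the asserted term-by-term bound really does fail there. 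The damage is limited for the paper's purposes: Theorem~\ref{thm:main} still follows from the corrected inequality by a double induction --- on $N$, and for fixed $N$ on the number of white vertices $s$, with base case $\bar g^{(N)}_{\rho}(x_1)=1\leq m(x_1)$ --- because the hypothesis $\mathbf K_\rho m\leq m$ in~\eqref{eq:suff1} controls every slot by the same function $m$. So you correctly isolated the problematic step, and your instinct that it is ``the technical heart'' was sound; but a complete proof must resolve it (by weakening the statement as above), not postpone it.
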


\noindent In particular, for $\iota =1$ we obtain
\be \label{eq:horeka}
	\bar g^{(N+1)}_\rho \leq \mathbf K_\rho \bar g_\rho^{(N)}. 
\ee
The proposition is similar to \cite[Proposition 4.1]{jansen2019clustergibbs}.

\begin{proof} 
	To lighten notation we drop the $\rho$-index and write down the proof for $\iota =1$ only, the proof in the general case is similar. For $\iota =1$ the right-hand side of~\eqref{eq:barkirk} is equal to 
\begin{multline} \label{eq:barkirk2} 
   \prod_{j=2}^s (1+ f(x_1,x_j)) \Biggl\{ \bar g^{(N)} (x_2,\ldots,x_s) \\
		    + \sum_{r=1}^\infty \frac{1}{r!} \int_{\mathbb X^r} \prod_{i=1}^r |f(x_1, y_i)|\\
		    	\times  \Biggl( \sum_{m\geq 1} \sum_{(L_1,\ldots, L_m)}  \bar g^{(N)} (x_2,\ldots,x_s, \vect y_{L_1}) \prod_{j=2}^m \bar g^{(N)} (\vect y_{L_j})\Biggr) \rho^r(\dd \vect y)\Biggr\}
\end{multline} 
where the sum is over tuples $(L_1,\ldots, L_m)$ of pairwise disjoint sets with union $\{1,\ldots, r\}$ and possibly empty $L_1$. When $m=1$ we set $\prod_{j=2}^m = 1$. Let us define 
\[
	\psi^{(N)}(I;J) = \begin{cases} 
			\psi(I;J), &\quad \#J \leq N,\\
			0, &\quad \text{else}.
		\end{cases} 
\] 
Replacing $\psi$ with $\psi^{(N)}$ on the right-hand side of the definition~\eqref{eq:hbardef} allows us to extend the summation from $n\leq N$ to $n\in \N_0$. We insert this representation of $\bar g_s^{(N)}$ into~\eqref{eq:barkirk2} and see that the right-hand side of~\eqref{eq:barkirk} is of the form
\be \label{eq:gaux}
	g_0\bigl ( (x_j)_{j\in I'}\bigr) + \sum_{n=1}^\infty \frac{1}{n!}\int_{\mathbb X^n} g_n( (x_j)_{j\in I'} ; x_{s+1},\ldots, x_{s+n}) \rho(\dd x_{s+1}) \cdots \rho(\dd x_{s+n})
\ee
with $I' = \{2,\ldots, s\}$, 
\[
	g_0\bigl ( (x_j)_{j\in I'}\bigr) =  \prod_{i\in I'}(1+ f(x_1,x_i)) |\psi^{(N)}(I';\varnothing)|,
\] 
and 
\begin{multline*} 
	g_n( (x_j)_{j\in I'} ; x_{s+1},\ldots, x_{s+n}) =  \e^{ - W(x_\iota;(x_j)_{j\in I'})}\Biggl( |\psi^{(N)}(I';\{s+1,\ldots, s+n\})|\\
	+ \sum_{\substack{L\subset [n]:\\ L \neq \varnothing}} \prod_{\ell\in L} |f(x_\iota, y_\ell)|
	\sum_{m\geq 1} \sum_{\substack{(L_1,\ldots,L_m),\\ (J_1,\ldots, J_m)}} |\psi^{(N)} \bigl( I'\cup L_1;J_1)|\cdots |\psi^{(N)}(L_m;J_m)|\Biggr)
\end{multline*} 
where $(L_1,\ldots, L_m)$ and $(J_1,\ldots, J_m)$ pairwise disjoint, $L_2,\ldots,L_m$ non-empty, and $\cup_i L_i = L$, $\cup_i J_i = \{s+1,\ldots, s+n\}\setminus L$. Proposition~\ref{prop:recursion} yields 
\begin{align*} 
	g_n( (x_j)_{j\in I'} ; x_{s+1},\ldots, x_{s+n}) &\geq |\psi^{(N+1)} (\{1,\ldots, s\}; \{s+1,\ldots, s+n\})|\\
	g_0( (x_j)_{j\in I'}) &\geq |\psi^{(N+1)} (\{1,\ldots, s\};\varnothing)|.
\end{align*} 
It follows that the expression~\eqref{eq:gaux} is larger or equal to $\bar g^{(N+1)}(x_1,\ldots, x_s)$. 
We have already observed that the right-hand side of~\eqref{eq:barkirk} is larger or equal to~\eqref{eq:gaux}. The proof is complete.  
\end{proof} 

\noindent The relation for $2$-connected graphs from Proposition~\ref{prop:2conn} allows us to relate $\bar d(x_1;\rho)$ and $\bar g(x_1,\ldots, x_s;\rho)$. 

\begin{prop} \label{prop:2conn2}
	We have 
	\[
		\bar d(x_1;\rho) \leq - \log \Biggl( 1- \sum_{j=1}^\infty \frac{1}{j!}\int_{\mathbb X^j}\prod_{i=1}^j |f(x_1,y_j)| \bar g(y_1,\ldots, y_j) \rho^j(\dd \vect y)\Biggr)
	\] 
	with the convention $- \log (1- q) = \infty$ for $q \geq 1$. 
\end{prop}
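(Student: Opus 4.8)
\section*{Proof proposal for Proposition~\ref{prop:2conn2}}

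The plan is to start from the definition of $\bar d(x_1;\rho)$, substitute the closed formula of Proposition~\ref{prop:2conn} for $D(\{1,\ldots,n+1\})$, and then reorganize the resulting signed sum over set partitions into an ordinary power series $\sum_{m\ge1}q^m/m=-\log(1-q)$, where $q$ denotes the quantity inside the logarithm in the statement. Writing the set of ``other'' vertices as $V=\{2,\ldots,n+1\}$ (so $\#V=n$), I would first apply the triangle inequality twice, once to pull the absolute value inside the sum over $L\subset V$ and once inside the Möbius sum, obtaining
\[
	|D(\{1,\ldots,n+1\})| \le \sum_{L\subset V}\prod_{\ell\in L}|f(x_1,x_\ell)|\sum_{m\ge1}(m-1)!\sum_{\{V_1,\ldots,V_m\}\in\mathcal P(V)}\prod_{r=1}^m\bigl|\psi(L\cap V_r,\,V_r\setminus L)\bigr|.
\]
Since $\psi(\varnothing,B)=0$, only partitions whose every block meets $L$ survive, which is exactly the restriction needed below.

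Next I would pass from unordered to ordered partitions: a set partition into $m$ blocks corresponds to $m!$ ordered tuples, so $(m-1)!\sum_{\text{unordered}}=\tfrac1m\sum_{\text{ordered}}$, and the factor $1/m$ is precisely what produces a logarithm. After inserting this bound into the definition of $\bar d$ and invoking Tonelli (all integrands are non-negative once the absolute values are in place), the symmetry factor $1/n!$ combines with the multinomial count of distributions of $n$ labelled points among $m$ ordered nonempty blocks to factorize the $m$-fold block sum. Concretely, for a single block of $v$ points I would define the symmetric weight
\[
	F(x^{(1)},\ldots,x^{(v)}):=\sum_{\varnothing\ne L\subset[v]}\prod_{\ell\in L}|f(x_1,x_\ell)|\;\Bigl|\psi\bigl((x^{(\ell)})_{\ell\in L};(x^{(k)})_{k\notin L}\bigr)\Bigr|,
\]
and set $q:=\sum_{v\ge1}\tfrac1{v!}\int_{\mathbb X^v}F\,\rho^v$. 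The factorization identity $\sum_{n\ge m}\tfrac1{n!}\int_{\mathbb X^n}\sum_{(V_1,\ldots,V_m)}\prod_r F(V_r)\,\rho^n=q^m$ then gives $\bar d(x_1;\rho)\le\sum_{m\ge1}\tfrac1m q^m=-\log(1-q)$, the bound being trivially true when $q\ge1$ by the stated convention.

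It remains to identify $q$ with the quantity inside the logarithm in the statement. Splitting the $v$ points of a block into $j\ge1$ white vertices (those in $L$, carrying the factors $|f(x_1,\cdot)|$) and $p=v-j$ black vertices, using the symmetry of $F$ and $\binom{v}{j}=v!/(j!\,p!)$, the sum over $p$ collapses the black-vertex integrations into exactly the series defining $\bar g$, whence
\[
	q=\sum_{j\ge1}\frac1{j!}\int_{\mathbb X^j}\prod_{i=1}^j|f(x_1,y_i)|\;\bar g(y_1,\ldots,y_j)\;\rho^j(\dd\vect y),
\]
which is the claimed expression, and the proof concludes.

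I expect the main obstacle to be the careful bookkeeping in the passage from the signed Möbius sum to the non-negative factorized series: one must apply the triangle inequality \emph{before} any cancellation is exploited, check that the empty-white-block convention $\psi(\varnothing,B)=0$ correctly enforces the ``every block meets $L$'' restriction at each stage, and verify that the combinatorial factors $(m-1)!$, $1/n!$ and the multinomial coefficients conspire to reproduce exactly the coefficients $1/m$ of $-\log(1-q)$. The interchange of the resulting summations and integrations is then harmless by Tonelli.
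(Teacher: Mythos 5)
Your proof is correct and is essentially the argument the paper intends: the paper itself omits the proof (saying only that it is ``similar to the proof of Proposition~\ref{prop:fpi}''), and your derivation---inserting Proposition~\ref{prop:2conn} into the definition of $\bar d(x_1;\rho)$, applying the triangle inequality, converting the $(m-1)!$-weighted M\"obius sum over unordered partitions into $\tfrac1m$ times a sum over ordered partitions, factorizing over blocks via Tonelli to get $\sum_{m\geq 1}q^m/m$, and identifying $q$ with the $\bar g$-series through the white/black splitting---supplies exactly the missing details, with all combinatorial factors checking out. The only blemishes are notational (e.g.\ $x_\ell$ versus $x^{(\ell)}$ in your definition of $F$), not mathematical.
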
 

\noindent The proof is similar to the proof of Proposition~\ref{prop:fpi} and it is therefore omitted.

\begin{proof} [Proof of Theorem~\ref{thm:main}]
	Let $m:\sqcup_{s\in \N}\mathbb X^s\to \R_+$ satisfy condition~\eqref{eq:suff1}. Because of $(\mathbf K_\rho m)(x_1) = 1$ by definition of $\mathbf K_\rho$, we have $m(x_1) \geq 1$ for all $x_1\in \mathbb X$. 
	Condition~\eqref{eq:suff1} implies 
  \[
  	m(x_1,\ldots, x_s) \geq \prod_{j=2}^s \bigl( 1+ f(x_1, x_j)\bigr) m(x_2,\ldots, x_s) 
  \] 	
  and a straightforward induction over $s\in \N$ yields 
  \be \label{eq:mprelim}
  	m (x_1,\ldots, x_s) \geq \prod_{1\leq i <j \leq s} \bigl( 1+ f(x_i,x_j)\bigr). 
  \ee
  Next we 
   show by induction over $N\in \N_0$ that the functions $\bar g_\rho^{(N)}$ from~\eqref{eq:hbardef} satisfy 
	\be \label{eq:hindu} 	
	   \bar g_s^{(N)}(x_1,\ldots, x_s) \leq m(x_1,\ldots, x_s).
	\ee	
	 For $N=0$, we have in view of~\eqref{eq:mprelim}
	\[
		\bar g_\rho^{(0)}(x_1,\ldots,x_s) = |\psi(x_1,\ldots,x_s;\varnothing)| = \prod_{1\leq i <j \leq s}\bigl( 1+ f(x_i,x_j)\bigr))\leq m (x_1,\ldots, x_s).  
	\] 
	For the induction step, suppose that~\eqref{eq:hindu} holds true for all $s\in \N$ and some $N\in \N$.
	We apply the recursive inequality from Proposition~\ref{prop:fpi} with $\iota =1$ in the form~\eqref{eq:horeka} and deduce 
	\[
		\bar g_\rho^{(N+1)} \leq \mathbf K_\rho \bar g_\rho^{(N)} \leq \mathbf K_\rho m\leq m. 
	\] 	
	This completes the induction. %\blue{Adapt the following}
	Passing to the limit $N\to\infty$ in~\eqref{eq:hindu} we obtain the bound for $\bar g(\cdot;\rho)\leq m$.
	The bound on  $\bar d (x_1;\rho)$ then follows from Proposition~\ref{prop:2conn2}. 
\end{proof} 

\section{Application to homogeneous systems} \label{sec:homogeneous}

Because of the stability of the pair potential, for every $s\in \N$ and $(x_1,\ldots, x_s) \in \mathbb X^s$ there exists $i\in \{1,\ldots, s\}$ such that 
	\be \label{eq:ichoice}
		\sum_{j\neq i} v(x_i- x_j) \geq - 2B.
	\ee
 Let $\iota(x_1,\ldots, x_s)$ be the smallest such $i$ and given a map $m:\sqcup \mathbb X^s \to \R_+$, define 
	\[
		(\Pi m)(x_1,\ldots, x_s):= m(x_{i}, x_1,\ldots, \widehat x_i, \ldots, x_s),\quad i = \iota(x_1,\ldots, x_s) 
	\] 
	where $\widehat x_i$ means omission of the variable. Thus $\Pi$ puts the selected index first and leaves the order of the variables otherwise unchanged. The conclusions of Theorem~\ref{thm:main} hold true as well if we can find a weight function such that 
	\be \label{eq:bsuff}
		(R_\rho \Pi m)(x_1) = (R_\rho m)(x_i)< 1, \quad \mathbf K\Pi m \leq m. 
	\ee
	Permuting variables is fairly standard in the context of Kirkwood-Salsburg equations~\cite{ruelle1969book}.

\begin{proof} [ Proof of Theorem~\ref{thm:LP}]
 	In the sufficient condition~\eqref{eq:bsuff} we make the ansatz 
	\[
		m(x_1,\ldots, x_s) := \kappa^{s}.
	\] 
	In order for $m$ to satisfy condition~\eqref{eq:bsuff}, it is enough that  
	\[  %\label{eq:groe1} 
       u \Bigl( \kappa^{s-1} 
		    + \sum_{j=1}^\infty \frac{1}{j!} C^j \rho^j \kappa^{s-1+j} \Bigr) 
		\Bigl\{ 1 - \sum_{j=1}^\infty \frac{1}{j!} C^j \rho^j \kappa^{j} \Bigr\}^{-1}
		\leq \kappa^{s}
	\]
	and 
	\[
		\sum_{j=1}^\infty \frac{1}{j!} C^j \kappa^{j} \rho^j <1.
	\] 	
	Equivalently, 
	\[
		\e^{C\rho \kappa}-1<1,\quad u \frac{\e^{C\rho \kappa} }{1- (\e^{C\rho \kappa}-1)} \leq \kappa
	\] 
	which is indeed satisfied under the conditions of the theorem. Therefore condition~\eqref{eq:bsuff} and the conclusions of Theorem~\ref{thm:main} hold true. Theorem~\ref{thm:LP} follows. 
\end{proof}

\begin{proof} [ Proof of Theorem~\ref{thm:groe}]
 	In the sufficient condition~\eqref{eq:bsuff} we make the ansatz 
	\[
		m(x_1,\ldots, x_s) := \kappa^{s-1}.
	\] 
	In order for $m$ to satisfy condition~\eqref{eq:bsuff}, it is enough that 
	\[
       u \Bigl( \kappa^{s-2} 
		    + \sum_{j=1}^\infty \frac{1}{j!} C^j \rho^j \kappa^{s-2+j} \Bigr) 
		\Bigl\{ 1 - \sum_{j=1}^\infty \frac{1}{j!} \rho^j \kappa^{j-1} \Bigr\}^{-1}
		\leq \kappa^{s-1}
	\]
	and 
	\[
		\sum_{j=1}^\infty \frac{1}{j!} C^j \kappa^{j-1} \rho^j <1.
	\] 	
	Equivalently, 
	\[
		\frac1\kappa(\e^{C\rho \kappa}-1)<1,\quad u \frac{\e^{C\rho \kappa} }{1- \frac1\kappa(\e^{C\rho \kappa}-1)} \leq \kappa
	\] 
	which is indeed satisfied under the conditions of the theorem. Therefore condition~\eqref{eq:bsuff} and the conclusions of Theorem~\ref{thm:main} hold true. Theorem~\ref{thm:groe} follows. 
\end{proof}

\begin{proof} [Proof of Theorem~\ref{thm:NF}]
	In Theorem~\ref{thm:main} we make the ansatz
	\[
		 m(x_1,\ldots, x_s) =  \kappa^{s-1} \prod_{1\leq i < j \leq s} \bigl(1+ f(x_i,x_j)\bigr). 
	\] 
	Then
	\[
		(R_\rho m)(x_1) = \frac 1\kappa \Psi(\rho \kappa) \quad (x_1\in \R^d).
	\]
	The numerator on the  left-hand side of~\eqref{eq:suff1} is equal to 
	\begin{multline}\label{eq:num} 
		\prod_{i=1}^s \bigl(1+ f(x_i,x_j)\bigr)
		\Biggl(\kappa^{s-2} 
		+ \sum_{j=1}^\infty \frac{1}{j!}\int_{(\R^d)^j}\prod_{i=1}^j |f(x_1,y_i)| \\
		\times \prod_{\substack{1\leq i \leq s,\\ 1\leq j \leq n}}(1+ f(x_i,y_j)) \prod_{1\leq i < j \leq s}(1+ f(y_i,y_j)) \kappa ^{s-2+j} \rho^j\dd \vect y\Biggr).
	\end{multline} 
	For non-negative pair potentials, we have $1+f = \e^{-v}\leq 1$. We bound the terms $1+ \Psi(x_i,y_j)$  on the second line of the previous displayed equation by $1$ and see that~\eqref{eq:num} is bounded from above by 
	\[
		\kappa^{s-2} \prod_{1\leq i < j \leq s} \bigl(1+f(x_i,x_j)\bigr) \Psi(\rho \kappa). 
	\] 
	Therefore Theorem~\ref{thm:main} is applicable if 
	\[
		\frac1\kappa (\Psi(\kappa\rho)-1)<1,\quad \frac{\Psi(\kappa\rho)}{1- \frac1\kappa (\Psi(\kappa\rho)-1)}\leq \kappa
	\] 
	which is indeed true if $2\Psi(\kappa \rho) - 1\leq \kappa$. 
\end{proof}

\section{Relation with connected graphs and activity expansions}\label{sec:activity}

Remember the set $\mathcal C_n$ of connected graphs with vertex set $[n]$. The \emph{Ursell function} is
\[
	\varphi_n^\mathsf T(x_1,\ldots, x_n) = \sum_{G\in \mathcal C_n} w(G;x_1,\ldots, x_n).
\] 
Given a measure $z$ on $(\mathbb X,\mathcal X)$, we define a new measure $\rho_1^z$ on $(\mathbb X,\mathcal X)$ by 
\be \label{eq:rhoconn}
	\rho_1^z(\dd q) = z(\dd q) \Bigl(1+ \sum_{n=1}^\infty \frac{1}{n!}\int_{\mathbb X^n} \varphi_{n+1}^\mathsf T(q,x_1,\ldots,x_n) z^n(\dd \vect x)\Bigr)
\ee
whenever the right-hand side is absolutely convergent. We may also view $\rho_1^z$ as a measure-valued formal power series; it is the exponential generating function for rooted connected graphs. When  convergent the measure $\rho_1^z$ corresponds to the intensity measure (one-particle density) of a grand-canonical Gibbs measure at activity $z$, see Appendix~\ref{app:gibbs}.  The inverse of the map $z\mapsto \rho_1^z$ is given by 
\be \label{eq:zdef}
	z(\dd q) = \rho(\dd q) \exp\Biggl( - \sum_{n=1}^\infty \frac{1}{n!}\int_{\mathbb X^n} D_{n+1}(q,x_1,\ldots, x_n)\rho^n(\dd \vect x)\Biggr),
\ee
see \cite[Theorem 3.4]{jansen-kuna-tsagkaro2019} for a rigorous statement for inhomogeneous systems, addressing issues of convergence as well. Theorem~\ref{thm:inversion} below is similar to the latter but works under a different conditions on $\rho$. 
Precisely, we assume that for some measurable function $m:\sqcup_{s\in \N}\mathbb X^s\to\R_+$, the measure $\rho$ satisfies the conditions from Theorem~\ref{thm:main} (i.e., $(R_\rho m)(x_1) <1$ on $\mathbb X$ and ~\eqref{eq:suff1}) and in addition 
\be \label{eq:suppsuff}
	\frac{1+ (R_\rho m)(x_1)}{1- (R_\rho m)(x_1)} \leq m(x_1) 
\ee
for all $x_1\in \mathbb X$. Notice that Theorem~\ref{thm:main} guarantees the absolute convergence of the series in~\eqref{eq:zdef}, with
\[
	z(\dd q) \leq \rho(\dd q) \e^{ \bar d (q;\rho)} \leq \rho(\dd q)\bigl( 1- R_\rho m(q)\bigr)^{-1}.
\] 
The additional condition~\eqref{eq:suppsuff} ensures that not only are all the density expansions convergent, but in addition the activity is in the domain of convergence of the activity expansions, see Lemma~\ref{lem:satisfied}.

\begin{remark} Condition~\eqref{eq:suppsuff} is automatically satisfied for weight functions $m(x_1,\ldots, x_s) = \kappa^s$ with $\kappa$ as in Theorem~\ref{thm:LP}, i.e., in the Lebowitz-Penrose domain. This extends to the inhomogeneous weights $m(x_1,\ldots, x_s) =\exp( b(x_1)+\cdots + b(x_s))$ with $b$ and $\rho$ as in~\eqref{eq:suff2}. However condition~\eqref{eq:suppsuff} excludes weight functions $m$ with $m(x_1) =1$ for all $x_1\in \mathbb X$, as used in the proof of Theorems~\ref{thm:groe} and~\ref{thm:NF} (Groeneveld and Nguyen-Fern{\'a}ndez domain).
\end{remark}

\begin{theorem} \label{thm:inversion}
	Let $\rho$ and $m$ satisfy the conditions from Theorem~\ref{thm:main} and in addition condition~\eqref{eq:suppsuff}. Define the measure $z$ by~\eqref{eq:zdef}. 
	Then the series defining $\rho_1^z(\dd q)$ is absolutely convergent and 
	$\rho_1^z = \rho$. 
\end{theorem}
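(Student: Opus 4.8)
The plan is to show that the maps $z\mapsto\rho_1^z$ of~\eqref{eq:rhoconn} and $\rho\mapsto z$ of~\eqref{eq:zdef} are mutually inverse. At the level of formal power series (in the measure) this is the classical statement that a rooted connected graph decomposes uniquely into its blocks ($2$-connected components) arranged in a tree around the cut vertices; the associated exponential (block) formula reads
\[
	\frac{\rho_1^z(\dd q)}{z(\dd q)} = \exp\Bigl( \sum_{n\geq 1}\frac{1}{n!}\int_{\mathbb X^n} D_{n+1}(q,x_1,\ldots,x_n)\, (\rho_1^z)^n(\dd \vect x)\Bigr),
\]
so that~\eqref{eq:zdef} is precisely the formal inverse of~\eqref{eq:rhoconn}. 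I take this combinatorial identity as known (see~\cite{jansen-kuna-tsagkaro2019,stell1964}); the real content of the theorem is to upgrade it to a genuine identity of measures under the stated hypotheses, which amounts to two things: (a) the series defining $\rho_1^z$ must converge absolutely, and (b) the absolutely convergent value must be identified with $\rho$.

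For (a), I would first use Theorem~\ref{thm:main} to guarantee $\bar d(q;\rho)<\infty$, so that $z$ is a well-defined measure with $z(\dd q)\leq \rho(\dd q)\bigl(1-(R_\rho m)(q)\bigr)^{-1}$ as already noted in the text. The crucial step is to verify that this $z$ lies in the domain of absolute convergence of the \emph{activity} expansion of the correlation functions, i.e.\ that $z$ satisfies the hypothesis of the activity-side convergence criterion (Theorem~\ref{thm:activity}). This is exactly where the extra condition~\eqref{eq:suppsuff} enters: starting from $z\leq \rho(1-R_\rho m)^{-1}$ and the definition of $R_\rho m$ one checks that the bound $(1+R_\rho m)/(1-R_\rho m)\leq m$ on singletons furnishes the activity-side fixed-point inequality, the factor $(1+R_\rho m)/(1-R_\rho m)$ accounting for the passage from the $\rho$-integral to the $z$-integral. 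This is the content of Lemma~\ref{lem:satisfied}, and once it is in place Theorem~\ref{thm:activity} yields absolute convergence of all the activity expansions, in particular of $\rho_1^z$.

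With absolute convergence secured, step (b) follows by justifying the formal manipulations analytically. Writing $r:=\rho_1^z$ and $G_\mu(q):=\sum_{n\geq1}\frac1{n!}\int_{\mathbb X^n}D_{n+1}(q,x_1,\ldots,x_n)\,\mu^n(\dd\vect x)$, the block formula reads $z=r\,\e^{-G_r}$, while~\eqref{eq:zdef} reads $z=\rho\,\e^{-G_\rho}$; hence $r\,\e^{-G_r}=\rho\,\e^{-G_\rho}$. To conclude $r=\rho$ I would invoke uniqueness: both $r$ and $\rho$ generate, through the common activity $z$, correlation functions solving one and the same Kirkwood-Salsburg equation, whose solution is unique in the Banach space singled out by the convergence bounds (the fixed-point equation $\mathbf K_\rho m=m$ of Theorem~\ref{thm:main} being the density-side avatar of this equation). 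Equivalently, the map $\mu\mapsto \mu\,\e^{-G_\mu}$ is a near-identity perturbation and hence injective on the convergence domain, which forces $r=\rho$.

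I expect the main obstacle to be part (a): controlling the absolute convergence of $\rho_1^z$ when $z$ is defined only implicitly from $\rho$ through~\eqref{eq:zdef}. The combinatorial inversion is soft, but verifying that the implicitly defined activity $z$ falls inside the (generally smaller) activity-convergence domain is delicate, and condition~\eqref{eq:suppsuff} is tailored exactly to make this work---consistent with the Remark, which records that~\eqref{eq:suppsuff} fails for the weight functions used in the Groeneveld and Nguyen-Fern{\'a}ndez regimes, precisely the regimes where the activity expansion is \emph{not} expected to converge up to the density radius. The secondary difficulties are the bookkeeping needed to turn the formal block identity into an absolutely convergent rearrangement (Fubini--Tonelli with the bounds from Theorems~\ref{thm:main} and~\ref{thm:activity}) and the rigorous execution of the uniqueness step.
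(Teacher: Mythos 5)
Your step (a) --- defining $z$, bounding $z\leq\bar z:=(1-R_\rho m)^{-1}\rho$, and feeding $z$ into the activity-side criterion via Lemma~\ref{lem:satisfied} and Theorem~\ref{thm:activity} --- is exactly the paper's first step, and your reading of how condition~\eqref{eq:suppsuff} enters there is correct. The gaps are in step (b). First, you use the block identity $z=r\,\e^{-G_r}$, $r=\rho_1^z$, as an identity of convergent expressions. It is only known as an identity of \emph{formal power series in $z$} (Eq.~\eqref{eq:zr} of the paper); to evaluate it at the actual measure $r$ you need the series $G_r=d(\cdot\,;\rho_1^z)$ to converge, i.e.\ $\bar d(\cdot\,;\rho_1^z)<\infty$. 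But the hypotheses of Theorem~\ref{thm:main} are imposed on $\rho$, not on $\rho_1^z$, and before knowing $\rho_1^z=\rho$ you cannot verify them for $\rho_1^z$: they are nonlinear inequalities, so the linear bound $\rho_1^z(\dd q)\leq m(q)\,\rho(\dd q)$ that Theorem~\ref{thm:activity} gives you does not transfer them. Second, and more seriously, your concluding appeal to uniqueness of the Kirkwood--Salsburg solution ``in the Banach space singled out by the convergence bounds'', or equivalently to injectivity of $\mu\mapsto\mu\,\e^{-G_\mu}$ as a ``near-identity perturbation'', is not available under the paper's abstract hypotheses: the bounds of Theorems~\ref{thm:main} and~\ref{thm:activity} come with no operator-norm contraction estimate, and the paper itself records precisely this point (conditions under which $z$ and the $\rho_s$ fall into a uniqueness domain for the Kirkwood--Salsburg equation) as an \emph{open problem} at the end of Appendix~\ref{app:gibbs}. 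Since injectivity of the density-to-activity map on the convergence domain is essentially what Theorem~\ref{thm:inversion} asserts, invoking it is close to circular.

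The paper's step (b) avoids both issues by staying with formal power series in $\rho$ until the very end. From the formal identity $\mathrm{id}=\tilde\zeta\circ\rho_1$ (your block formula, read as power series in $z$) and the existence of a formal right inverse $\zeta$ with $\rho_1\circ\zeta=\mathrm{id}$ (as in Lemma~2.1 of \cite{jansen-kuna-tsagkaro2019}), associativity of composition gives $\zeta=\tilde\zeta$, hence $\rho_1\circ\tilde\zeta[\rho]=\rho$ as formal power series \emph{in $\rho$}, which is Eq.~\eqref{eq:almostthere}. It then suffices to check that this one composite series is absolutely convergent: replacing the expansion coefficients of $\exp(-d(x_i;\rho))$ by their absolute values yields $\exp(\bar d(x_i;\rho))\leq(1-R_\rho m(x_i))^{-1}$, and the resulting majorant is the series with coefficients $|\varphi^{\mathsf T}_{n+1}|$ evaluated at $\bar z$, which converges by (the proof of) Lemma~\ref{lem:satisfied}. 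Thus the formal identity upgrades to an identity of convergent expressions and $\rho_1^z=\rho$ follows, with no appeal to uniqueness or injectivity and no bound on $\bar d(\cdot\,;\rho_1^z)$ ever needed. To repair your argument, replace the uniqueness step by this composition-of-formal-power-series argument; proving the injectivity you invoke would be a separate (and harder) undertaking that the paper does not carry out.
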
 

Before we turn to the proof, we address factorial moment measures (a.k.a.\ $s$-point correlation functions or distribution functions).  Let $W$ and $B$ be two finite disjoint sets, with $W$ non-empty. Elements of $W$ are called white, elements of $B$ are called black. We define $\mathcal C(W,B)$ as the set of graphs with vertex set $W\cup B$ such that every black  vertex is connected to at least one white vertex by a  path in $G$. 
Set
\[
	\varphi\bigl( (x_i)_{i\in W}; (x_j)_{j\in B}) := \sum_{G\in \mathcal C(W,B)} w\bigl(G; (x_i)_{i\in W\cup B}\bigr).
\] 
We define a function $\alpha(\cdot;z):\sqcup_{s\in \N}\mathbb X\to \R_+$ by 
\begin{multline}\label{eq:alphadef}
 \alpha (x_1,\ldots, x_s;z)  
	=  \varphi(x_1,\ldots, x_s;\varnothing)\\
	+ \sum_{n=1}^\infty \frac{1}{n!}\int_{\mathbb X^n} \varphi(x_1,\ldots, x_s;x_{s+1},\ldots, x_{s+n})
	z(\dd x_{s+1})\cdots z(\dd x_{s+n}).
\end{multline}
The convergence of $\alpha$ is addressed in Theorem~\ref{thm:correlations} below. Assuming absolute convergence of the series, we define a family of measures $\rho_s^z$ on $(\mathbb X^s,\mathcal X^s)$, $s\in \N$,  by 
\[
	\rho_s^z\bigl( \dd (x_1,\ldots, x_s)\bigr) = \alpha(x_1,\ldots, x_s;z) z(\dd x_1)\cdots z(\dd x_s). 
\]
The measures $\rho_s^z$ are the factorial moment measures ($s$-point correlation functions) of a Gibbs measure at activity $z$, see \cite[Section 4]{stell1964} and  Proposition~\ref{prop:corr-z}. Let 
\begin{multline} \label{eq:hdef}
 g(x_1,\ldots, x_s;\rho)  
	=  \psi(x_1,\ldots, x_s;\varnothing)\\
	+ \sum_{n=1}^\infty \frac{1}{n!}\int_{\mathbb X^n} \psi(x_1,\ldots, x_s;x_{s+1},\ldots, x_{s+n})
	\rho(\dd x_{s+1})\cdots \rho(\dd x_{s+n}).
\end{multline}

\begin{theorem}\label{thm:correlations}
	Let $\rho$ and $z$ be as in Theorem~\ref{thm:inversion}, and $\rho_1^z$ as in~\eqref{eq:rhoconn}.  Then the series defining $\alpha(x_1,\ldots,x_s;z)$ are absolutely convergent, moreover
	\[
		\rho_s^z\bigl( \dd(x_1,\ldots, x_s)\bigr) = g(x_1,\ldots,x_s;\rho_1^z) \rho_1^z(\dd x_1)\cdots \rho_1^z(\dd x_s) \quad (s\geq 2).
	\] 
\end{theorem}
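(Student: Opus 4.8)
The plan is to prove the measure identity by comparing the activity expansion of the left-hand side with a resummed density expansion. Since $\rho_s^z(\dd(x_1,\ldots,x_s)) = \alpha(x_1,\ldots,x_s;z)\,z(\dd x_1)\cdots z(\dd x_s)$ by definition, and since Theorem~\ref{thm:inversion} already supplies $\rho_1^z = \rho$ together with the absolute convergence of the series for $\rho_1^z$, it suffices to establish the $z^{\otimes s}$-a.e.\ identity
\[
	\alpha(x_1,\ldots,x_s;z)\,z(\dd x_1)\cdots z(\dd x_s) = g(x_1,\ldots,x_s;\rho_1^z)\,\rho_1^z(\dd x_1)\cdots\rho_1^z(\dd x_s),
\]
whose right-hand side converges absolutely by Theorem~\ref{thm:main} (applied to $\rho = \rho_1^z$, so that $g\leq \bar g\leq m<\infty$). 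Before any rearrangement of series is permitted I would first settle the absolute convergence of $\alpha$: condition~\eqref{eq:suppsuff} is designed precisely so that the activity $z$ lies in the domain of convergence of the activity expansion (cf.\ the Remark and Lemma~\ref{lem:satisfied}), so the convergence criterion for activity expansions, Theorem~\ref{thm:activity}, applies and bounds $\sum_n \frac{1}{n!}\int |\varphi(x_1,\ldots,x_s;\cdot)|\,|z|^n$. This justifies the interchanges of summation and integration used below.

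The combinatorial heart of the argument is a topological reduction (block decomposition) relating the connected graphs $\mathcal C(W,B)$ appearing in $\alpha$ to the graphs $\mathcal D(W,B')$ appearing in $g$. Fix the white vertex set $W = \{x_1,\ldots,x_s\}$. I claim every $G \in \mathcal C(W,B)$ decomposes uniquely into a skeleton and a family of decorations: the skeleton is the induced graph on $W$ together with the black vertices $B' \subseteq B$ that stay connected to $W$ after removal of any single vertex, and this skeleton lies in $\mathcal D(W,B')$ by the very meaning of property~(ii); the remaining field points $B \setminus B'$ split into connected pieces, each attached to the skeleton through exactly one articulation vertex (white or black). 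Under this decomposition the weight $w(G)$ factorizes into the skeleton weight times the product of the decoration weights, and the decomposition is a bijection between $\mathcal C(W,B)$ and the set of (skeleton, rooted-decoration-assignment) pairs.

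Given the decomposition, the resummation is immediate. Summing, at a fixed skeleton vertex $q$, over all decorations rooted at $q$ and over their field points weighted by $z$ produces exactly the bracketed factor in~\eqref{eq:rhoconn}, that is the density of $\rho_1^z$ with respect to $z$; hence the weight $z(\dd q)$ carried by each skeleton vertex is dressed into $\rho_1^z(\dd q)$. Carrying this out simultaneously at every skeleton vertex turns the $s$ white roots' weights $z(\dd x_i)$ into $\rho_1^z(\dd x_i)$, turns each black skeleton vertex weight into $\rho_1^z$, and collapses the skeleton sums $\sum_{G\in\mathcal D(W,B')} w(G)$ into $\psi(x_1,\ldots,x_s;\cdot)$. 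Summing over the number and labels of black skeleton vertices with the dressed weight $\rho_1^z$ reproduces the series~\eqref{eq:hdef} defining $g(x_1,\ldots,x_s;\rho_1^z)$, which yields the identity; the absolute convergence from the first step guarantees that the rearrangement is legitimate and in passing also delivers the asserted absolute convergence of $\alpha$.

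The main obstacle I expect is the rigorous justification of the topological reduction: that selecting $B'$ as the black vertices ``$2$-connected into'' $W$ yields a skeleton genuinely in $\mathcal D(W,B')$ (matching property~(ii) exactly), and that the attachment of decorations at cut vertices is well defined and bijective, with no double counting between a decoration and the skeleton. The white/black bookkeeping—decoration field points are never dressed, whereas skeleton vertices always are—together with the need to keep every manipulation inside the region of absolute convergence, is where the care lies; the purely algebraic resummation via~\eqref{eq:rhoconn} is then routine.
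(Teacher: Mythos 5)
Your overall plan coincides with the paper's proof: the same skeleton/decoration decomposition of graphs in $\mathcal C(W,B)$ into a graph in $\mathcal D(W,B')$ plus rooted connected pieces hanging at single skeleton vertices (this is exactly the paper's Eq.~\eqref{eq:first}, including your characterization of $B'$), the same use of Lemma~\ref{lem:satisfied} and Theorem~\ref{thm:activity} for the absolute convergence of $\alpha$, and the same final appeal to Theorem~\ref{thm:inversion}. The combinatorial details you flag as ``the main obstacle'' are likewise left to standard references in the paper, so that is not where you diverge.

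There is, however, a genuine gap in your analytic justification of the resummation. You claim that the absolute convergence of $\alpha$, i.e.\ finiteness of $\bar\alpha(x_1,\ldots,x_s;z)=\sum_n \frac1{n!}\int |\varphi|\, z^n$, legitimizes regrouping the decomposed series. It does not: the decomposition $\varphi=\sum_{\mathrm{splits}}\psi\prod\varphi^{\mathsf T}$ sits \emph{inside} the absolute value and has cancellations, and reorganizing its terms across different orders $n$ (which your dressing of $z(\dd q)$ into $\rho_1^z(\dd q)$ does) requires summability of the split majorant $\sum_n\frac{1}{n!}\int\sum_{\mathrm{splits}}|\psi|\prod|\varphi^{\mathsf T}|\,z^n$. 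By Tonelli this majorant equals $\bar g(x_1,\ldots,x_s;\hat\rho)\prod_{i=1}^s\bar\alpha(x_i;z)$ with $\hat\rho(\dd y)=\bar\alpha(y;z)\,z(\dd y)$. Since $\bar\alpha$ carries absolute values, $\hat\rho\geq\rho_1^z=\rho$, and the available bounds ($z\leq\bar z$, $\bar\alpha\leq\tilde m$ from Lemma~\ref{lem:satisfied}) only give $\hat\rho\leq m\,\rho$; nothing in the hypotheses of Theorems~\ref{thm:main} and~\ref{thm:inversion} makes $\bar g(\,\cdot\,;m\rho)$ finite (with Lebowitz--Penrose weights $m=\kappa^s$ one has $\kappa>1$, so $m\rho=\kappa\rho$ can lie outside the convergence region when $\rho$ is near its boundary). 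The paper avoids this trap by a change of direction: the decomposition is recorded only as an identity of \emph{formal} power series in $z$ (Eq.~\eqref{eq:first}); substituting $z=\tilde\zeta[\rho]$ turns it into an identity of power series in $\rho$ (Eq.~\eqref{eq:second}) in which the composition burden falls on the $\alpha$-side, whose majorant $\bar\alpha(\,\cdot\,;\bar z)\prod_i \e^{\bar d(x_i;\rho)}$ \emph{is} finite by Lemma~\ref{lem:satisfied}, while the $g$-side is a plain series in $\rho$ controlled by Theorem~\ref{thm:main}. To repair your argument, keep your decomposition as a coefficient-wise identity and then evaluate along the paper's route, rather than rearranging the convergent $z$-series directly.
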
 
\noindent (The absolute convergence of $g(x_1,\ldots,x_s;\rho_1^z)$ follows from the equality $\rho_1^z=\rho$ proven in Theorem~\ref{thm:inversion}, the conditions on $\rho$, and Theorem~\ref{thm:main}.)

The proofs of Theorems~\ref{thm:inversion} and~\ref{thm:correlations} build on a  similar convergence result for generating functions $\alpha$ from~\eqref{eq:alphadef} from~\cite{jansen-kolesnikov2020} which we briefly recall. For $\tilde m:\sqcup_{s\in \N}\to \R_+$ a measurable function, we define $\mathbf T_z \tilde m:\sqcup_{s\in \N}\mathbb X^s\to \R_+\cup\{\infty\}$ by 
\[
	(\mathbf T_z \tilde m)(x_1) = 1+ \sum_{j=1}^\infty \frac{1}{j!}\int_{\mathbb X^j} \prod_{i=1}^j|f(x_1,y_i)| \tilde m(y_1,\ldots, y_j) z^n(\dd \vect y)
\]
and for $s\geq 2$
\begin{multline}
	(\mathbf T_z \tilde m)(x_1,\ldots, x_s) = \prod_{i=2}^s \bigl(1+ f(x_1,x_i)\bigr) \Bigl( 
	\tilde m(x_2,\ldots, x_s)\\
		+ \sum_{j=1}^\infty \frac{1}{j!}\int_{\mathbb X^j} \prod_{i=1}^j|f(x_1,y_i)|\, \tilde m(x_2,\ldots, x_s, y_1,\ldots, y_j) z^j(\dd \vect y)\Bigr). 
\end{multline} 
We define $\bar \alpha:\sqcup_{s\in \N}\mathbb X^s\to \R_+$ by a formula similar to~\eqref{eq:alphadef} but with absolute values around $\varphi$. 
 Theorem~\ref{thm:main} was modeled after the following theorem. 

\begin{theorem} \label{thm:activity} \cite{jansen-kolesnikov2020}
	Let $z$ be a measure on $(\mathbb X,\mathcal X)$. Suppose there exists a non-negative measurable function $\tilde m:\sqcup_{s\in \N}\mathbb X^s\to \R_+$ such that 
%	$\mathbf T_z \tilde m\leq \tilde m$
%	pointwise on $\sqcup_{s\in \N}\mathbb X^s$. 
	\be \label{eq:actisuff}
		\mathbf T_z \tilde m\leq \tilde m
%		(\mathbf T_z \tilde m)(x_1,\ldots, x_s) \leq \tilde m(x_1,\ldots, x_s) \qquad (s\in \N, (x_1,\ldots, x_s) \in \mathbb X^s).
	\ee
	on $\sqcup_{s\in \N}\mathbb X^s$.
	Then $\bar \alpha\leq \tilde m<\infty$ on $\sqcup_{s\in \N}\mathbb X^s$.
%	\[
%		\bar \alpha(x_1,\ldots, x_s) \leq m(x_1,\ldots, x_s) <\infty
%	\] 
%	on $\sqcup_{s\in \N}\mathbb X^s$. 
\end{theorem}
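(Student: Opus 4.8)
The plan is to mirror the three-step structure behind Theorem~\ref{thm:main} (a recurrence for graph weights, a recurrence for the truncated partial sums, an induction), replacing $2$-connected graphs by connected graphs and $\mathbf K_\rho$ by $\mathbf T_z$; the one genuine simplification is that the recurrence for connected graphs carries no M\"obius inversion. First I would prove the Kirkwood--Salsburg recurrence for the weights $\varphi$, the analog of Proposition~\ref{prop:recursion}. Root a graph $G\in\mathcal C(W,B)$ at the white vertex $1$, let $L\subset B$ be the black vertices adjacent to $1$, and delete $1$. Every black vertex of $B\setminus L$ that reached a white vertex through $1$ now reaches $I':=W\setminus\{1\}$ or $L$ directly, so the deleted graph lies in $\mathcal C(I'\cup L;B\setminus L)$, and conversely any such graph extends back by reattaching $1$ with the mandatory edges to $L$ and arbitrary edges to $I'$. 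Because $\mathcal C(W,B)$ imposes \emph{no} survival-under-removal condition, the deleted graph need not be reconnected and no alternating partition sum arises. Summing the free edges to $I'$ gives, for $\#W\ge2$,
\[
\varphi(W;B)=\prod_{i\in I'}\bigl(1+f(x_1,x_i)\bigr)\sum_{L\subset B}\prod_{\ell\in L}f(x_1,x_\ell)\,\varphi(I'\cup L;B\setminus L),
\]
together with $\varphi(\{1\};B)=\sum_{\varnothing\ne L\subset B}\prod_{\ell\in L}f(x_1,x_\ell)\,\varphi(L;B\setminus L)$ for $B\ne\varnothing$ and $\varphi(\{1\};\varnothing)=1$.

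Next I would pass to the partial sums $\bar\alpha^{(N)}$, defined like $\bar\alpha$ but truncated at $n\le N$ black vertices. Taking absolute values (here $1+f=\e^{-v}\ge0$, so only the $f$-edges to $L$ need the modulus), integrating against $z$, and using symmetry to fold the choice of an $L$ of size $j$ into the factor $\tfrac1{j!}\int\prod_i|f(x_1,y_i)|$, each black vertex promoted into the white slot lowers the remaining budget by one. This produces the analog of Proposition~\ref{prop:fpi},
\[
\bar\alpha^{(N+1)}(W)\le\prod_{i\in I'}(1+f)\Bigl(\bar\alpha^{(N+1)}(I')+\sum_{j\ge1}\tfrac1{j!}\int_{\mathbb X^j}\prod_{i=1}^j|f(x_1,y_i)|\,\bar\alpha^{(N)}\bigl(I'\cup\{y_1,\dots,y_j\}\bigr)\,z^j(\dd\vect y)\Bigr),
\]
in which the terms with a genuine black neighbor ($j\ge1$) carry truncation $N$, while the term with no black neighbor still carries truncation $N+1$ but on the \emph{smaller} white set $I'$.

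Finally I would show $\bar\alpha^{(N)}\le\tilde m$. The base case $\bar\alpha^{(0)}=\prod_{i<j}(1+f)\le\tilde m$ follows by iterating the inequality $\tilde m(x_1,\dots,x_s)\ge\prod_{i\ge2}(1+f)\,\tilde m(x_2,\dots,x_s)$ (contained in $\mathbf T_z\tilde m\le\tilde m$) down to $\tilde m(x_s)\ge1$. For the step I would fix $N$, assume $\bar\alpha^{(N)}\le\tilde m$ at all $s$, and run an \emph{inner} induction on $\#W$: at $\#W=1$ the recurrence forces $L\ne\varnothing$, so the entire right-hand side already has truncation $N$ and collapses to $(\mathbf T_z\tilde m)(x_1)\le\tilde m(x_1)$; at $\#W=s$ the leftover term $\bar\alpha^{(N+1)}(I')$ is bounded by the inner hypothesis at $s-1$ and the $j\ge1$ terms by the outer hypothesis at level $N$, so the bound collapses to $(\mathbf T_z\tilde m)(W)\le\tilde m(W)$. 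Letting $N\to\infty$ and invoking monotone convergence yields $\bar\alpha\le\tilde m<\infty$.

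The main obstacle is precisely this truncation bookkeeping: a single induction on $N$ does not close, because one application of the recurrence does not lower the ``no black neighbor'' contribution to level $N$. Resolving that contribution by the secondary induction on the number of white vertices---which terminates at the singleton, where the recurrence automatically supplies at least one black neighbor and so drops the budget---is the only delicate point; everything else is the same monotonicity-and-supersolution argument as for Theorem~\ref{thm:main}.
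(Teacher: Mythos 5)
Your proof is correct, but note that the paper does not actually prove Theorem~\ref{thm:activity}: it is imported from \cite{jansen-kolesnikov2020} (with the remark that the sufficient condition extends easily from non-negative to general pair potentials), and the paper's own argument of this type is its proof of Theorem~\ref{thm:main}. Your proposal is precisely that argument transplanted to connected graphs: your rooted bijection (mandatory $f$-edges to $L\subset B$, free edges to $I'$) is the Kirkwood--Salsburg analogue of Proposition~\ref{prop:recursion}, correctly simplified because deleting the root of a graph in $\mathcal C(W,B)$ requires no connectivity to be restored, hence no M\"obius inversion and no geometric-series factor. Your genuine departure from the template is the truncation bookkeeping. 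The paper's Proposition~\ref{prop:fpi} states its no-black-neighbour term already at level $N$, namely $\bar g^{(N)}_\rho((x_j)_{j\neq\iota})$, so that the single induction $\bar g^{(N+1)}_\rho\le \mathbf K_\rho\bar g^{(N)}_\rho\le\mathbf K_\rho m\le m$ closes; what the recurrence directly yields, however, is your version, with that term at level $N+1$ on the smaller white set (the exact budget in the $j\ge 1$ terms is $N+1-j$, which you silently, and correctly, replace by $N$ using monotonicity in the truncation parameter). Your inner induction on $\#W$, anchored at the singleton where every term of the recurrence carries at least one $f$-factor and hence budget at most $N$, is a sound and self-contained way to close this loop, and it is arguably the more careful bookkeeping: the stronger level-$N$ form of the first term does not follow from the recurrence by term-by-term comparison alone. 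Two trivial remarks: monotone convergence is not needed at the end, since $\bar\alpha=\sup_N\bar\alpha^{(N)}\le\tilde m$ pointwise suffices; and it is worth stating explicitly that your argument never uses non-negativity of $v$ (only $1+f=\e^{-v}\ge 0$, needed to pull the factor $\prod_{i\in I'}(1+f(x_1,x_i))$ out of the absolute value), so it delivers exactly the extension to general pair potentials that the paper asserts without proof.
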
 

\noindent The result is formulated in~\cite{jansen-kolesnikov2020} for non-negative potentials only, in fact as a necessary and sufficient condition, the sufficient condition is easily extended to general pair potentials. 

The first step in the proofs of Theorems~\ref{thm:inversion} and~\ref{thm:correlations} is to check that the conditions of Theorem~\ref{thm:activity} are met, thus guaranteeing the absolute convergence of the series defining $\rho_s^z$, $s\geq 1$. 
Set
\be \label{eq:tildem}
	\tilde m(x_1,\ldots, x_s):= m(x_1,\ldots, x_s) \prod_{i=1}^s \bigl( 1 - R_\rho m(x_i)\bigr).
\ee

\begin{lemma} \label{lem:satisfied}
	Under the conditions of Theorem~\ref{thm:inversion}, the activity $z$ defined in~\eqref{eq:zdef} and the function $\tilde m$ from~\eqref{eq:tildem} satisfy condition~\eqref{eq:actisuff} from Theorem~\ref{thm:activity}.
\end{lemma}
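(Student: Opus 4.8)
The plan is to verify the defining inequality $\mathbf T_z \tilde m \le \tilde m$ pointwise on $\sqcup_{s\in\N}\mathbb X^s$, handling the singleton case $s=1$ separately from the case $s\ge 2$. The whole argument rests on a single elementary input, which I would record first: a one-sided comparison between the measures $z$ and $\rho$. Writing $R(x):=(R_\rho m)(x)$ for brevity, Theorem~\ref{thm:main} gives $\bar d(q;\rho)\le -\log(1-R(q))$, and since the exponent defining $z$ in~\eqref{eq:zdef} is bounded in absolute value by $\bar d(q;\rho)$ (triangle inequality, moving the modulus inside the integrals), one obtains $z(\dd q)\le (1-R(q))^{-1}\rho(\dd q)$. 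Only this upper bound is needed here. Note $1-R(q)>0$, so all factors stay positive and inequalities survive integration against the non-negative integrands $\prod_i|f(\cdot,y_i)|\,m\ge 0$.

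First I would dispatch the base case $s=1$. Using the upper bound on $z$ together with the definition $\tilde m(y_1,\dots,y_j)=m(y_1,\dots,y_j)\prod_{l=1}^{j}(1-R(y_l))$, the product $\tilde m(\vect y)\,z^j(\dd\vect y)$ is dominated termwise by $m(\vect y)\,\rho^j(\dd\vect y)$, since each factor $(1-R(y_l))$ built into $\tilde m$ cancels against the corresponding $(1-R(y_l))^{-1}$ coming from the bound on $z$. Consequently $(\mathbf T_z \tilde m)(x_1)\le 1+(R_\rho m)(x_1)=1+R(x_1)$. It then remains to check $1+R(x_1)\le \tilde m(x_1)=m(x_1)\bigl(1-R(x_1)\bigr)$, which is exactly the additional hypothesis~\eqref{eq:suppsuff}. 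This is the only place~\eqref{eq:suppsuff} enters, and it is precisely what it was designed for.

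For $s\ge 2$ I would factor $\prod_{i=2}^s(1-R(x_i))$ out of both terms inside the parenthesis in the definition of $(\mathbf T_z \tilde m)(x_1,\dots,x_s)$, using $\tilde m(x_2,\dots,x_s)=m(x_2,\dots,x_s)\prod_{i=2}^s(1-R(x_i))$ and the analogous identity with the extra variables $y_1,\dots,y_j$ appended. Replacing $z^j$ by $\rho^j$ via the same cancellation of $(1-R(y_l))$ factors turns the remaining parenthesis into exactly $\prod_{i=2}^s(1+f(x_1,x_i))\bigl(m(x_2,\dots,x_s)+\sum_j\cdots\bigr)$, which by the definition of $\mathbf K_\rho$ and the geometric-series identity $\sum_{k\ge 0}R(x_1)^k=(1-R(x_1))^{-1}$ equals $(1-R(x_1))\,(\mathbf K_\rho m)(x_1,\dots,x_s)$. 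Collecting the factors yields $(\mathbf T_z \tilde m)(x_1,\dots,x_s)\le \prod_{i=1}^s(1-R(x_i))\,(\mathbf K_\rho m)(x_1,\dots,x_s)$, and the fixed-point inequality $\mathbf K_\rho m\le m$ from~\eqref{eq:suff1} bounds this by $\prod_{i=1}^s(1-R(x_i))\,m(x_1,\dots,x_s)=\tilde m(x_1,\dots,x_s)$, as desired.

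I do not expect a serious obstacle: once the comparison $z\le (1-R)^{-1}\rho$ is in place, the argument is bookkeeping that matches the factors $(1-R(\cdot))$ built into $\tilde m$ against those in $z$ and against the geometric factor hidden inside $\mathbf K_\rho$. The one genuine structural point to get right is that $\mathbf K_\rho$ and $\mathbf T_z$ have \emph{different} shapes at $s=1$ (the former is identically $1$, the latter carries the full integral series), so the base case cannot be closed by the $s\ge 2$ cancellation and instead forces the extra hypothesis~\eqref{eq:suppsuff}; recognizing this is the conceptual crux of the lemma.
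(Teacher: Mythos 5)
Your proof is correct and takes essentially the same route as the paper: the comparison $z(\dd q)\le (1-R_\rho m(q))^{-1}\rho(\dd q)$ (which the paper packages as an auxiliary measure $\bar z$ satisfying $\tilde m\,\bar z^s = m\,\rho^s$), the cancellation of the $(1-R_\rho m)$ factors to pass from $z$-integrals to $\rho$-integrals, the identification of the $s\ge 2$ expression with $\prod_{i=1}^s(1-R_\rho m(x_i))\,(\mathbf K_\rho m)(x_1,\ldots,x_s)$ via the geometric series, followed by $\mathbf K_\rho m\le m$, and the use of~\eqref{eq:suppsuff} exactly once, at $s=1$. No gaps.
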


\begin{proof} 
	Set $\bar z(\dd x) = \rho(\dd x_1) (1- R_\rho m(x_1))^{-1}$ and notice that 
	\[
		\tilde m(x_1,\ldots, x_s) \bar z^s(\dd \vect x) =	m(x_1,\ldots, x_s) \rho^s(\dd \vect x).
	\]
	Using~\eqref{eq:suppsuff}, we get 
	\[
		\mathbf T_z \tilde m(x_1) \leq \mathbf T_{\bar z} \tilde m(x_1) =1+ R_\rho m(x_1) \leq (1- R_\rho m(x_1)) m(x_1) = \tilde m(x_1).
	\] 
	For $s\geq 2$, we have the inequality 
	\begin{align*}
		\mathbf T_z \tilde m(x_1,\ldots, x_s) &\leq \mathbf T_{\bar z} \tilde m(x_1,\ldots, x_s)\\
			& = \prod_{i=1}^s \bigl(1- R_\rho m(x_i)\bigr) \mathbf K_\rho m (x_1,\ldots, x_s)\\
			&\leq \prod_{i=1}^s \bigl(1- R_\rho m(x_i)\bigr) m (x_1,\ldots,x_s)\\
			&= \tilde m(x_1,\ldots,x_s). \qedhere
	\end{align*} 
\end{proof} 

\begin{proof}[Proof of Theorem~\ref{thm:inversion}]
	We follow the proof of Theorem~3.4 in \cite{jansen-kuna-tsagkaro2019}. 
	The absolute convergence of the series defining $\rho_1^z$ is guaranteed by Theorem~\ref{thm:activity} and Lemma~\ref{lem:satisfied}, it remains to show that $\rho_1^z = \rho$. 
	First we note that 
	\[
		\rho_1^z(\dd q) = z(\dd q) \exp\Biggl( \sum_{n=1}^\infty \frac{1}{n!}\int_{\mathbb X^n} D_{n+1}(q,x_1,\ldots, x_n) \rho_1^z(\dd x_1)\cdots \rho_1^z(\dd x_n)\Biggr)
	\] 
	as an equality of formal power series in $z$ (see Lemma~3.8 in~\cite{jansen-kuna-tsagkaro2019} and the references therein, and \cite[Theorem 1.3]{leroux2004}), which implies  
	\be \label{eq:zr}
		z(\dd q) = \rho_1^z(\dd q) \exp\Biggl( -\sum_{n=1}^\infty \frac{1}{n!}\int_{\mathbb X^n} D_{n+1}(q,x_1,\ldots, x_n) \rho_1^z(\dd x_1)\cdots \rho_1^z(\dd x_n)\Biggr)
	\ee
	as an equality of formal power series. Eq.~\eqref{eq:zr} plays the role of the fixed point equation ($\mathsf{FP'}$) in~\cite[Section 2.3]{jansen-kuna-tsagkaro2019}. Let us write $\tilde \zeta[\rho](\dd q)$ for the measure-valued power series on the right-hand side of~\eqref{eq:zdef}, so that~\eqref{eq:zdef} becomes $z = \tilde \zeta[\rho]$. Let us also write $\rho_1[z](\dd q)$ instead of $\rho_1^z$. Then~\eqref{eq:zr} reads 
	\[
		z = \tilde \zeta \bigl[ \rho_1[z]\bigr]
	\]
	or equivalently, $\mathrm{id} = \tilde \zeta\circ \rho_1$ with the notion of composition of measure-valued formal power series given in \cite[Appendix A]{jansen-kuna-tsagkaro2019}. Proceeding as in \cite[Lemma 2.1]{jansen-kuna-tsagkaro2019} one finds a measure-valued formal power series $\zeta[\rho]$ such that $\rho_1 \circ\zeta = \mathrm{id}$. It follows that 
	\[
	   \zeta = \mathrm{id}\circ \zeta = (\tilde \zeta \circ \rho_1) \circ \zeta = \tilde \zeta \circ (\rho_1\circ \zeta) = \tilde \zeta \circ \mathrm{id} = \tilde \zeta,
	\]
	hence also $\rho_1\circ \tilde \zeta [\rho] = \rho$ as an equality of formal power series in $\rho$, i.e., 
	\begin{align}
		\rho(\dd q) &= \tilde \zeta[\rho](\dd q) \Biggl( 1+ 
		\sum_{n=1}^\infty \frac{1}{n!}\int_{\mathbb X^n} \varphi_{n+1}^\mathsf T(q,x_1,\ldots, x_n) 
		\prod_{i=1}^n\tilde \zeta[\rho](\dd x_i)\Biggr)\notag \\
		  & = \rho(\dd q) \e^{- d(q;\rho)}\Bigl( 1+ \sum_{n=1}^\infty \frac{1}{n!}\int_{\mathbb X^n} \varphi_n^\mathsf T(q,x_1,\ldots, x_n) \prod_{i=1}^n \e^{ - d(x_i,\rho)} \rho^n(\dd \vect x)\Bigr) \label{eq:almostthere}
	\end{align} 
	 as an equality of formal power series in $\rho$, where 
	\[
		d(q,\rho):= \sum_{n=1}^\infty \frac{1}{n!}\int_{\mathbb X^n} D_{n+1}(q,x_1,\ldots, x_n)\rho^n(\dd \vect x).
	\] 
The right-hand side of~\eqref{eq:almostthere} is a power series that is absolutely convergent under our assumptions on $\rho$. 
	Indeed if we put absolute values around the expansion coefficients of $\exp( - d(x_i,\rho))$, we obtain simply $\exp( \bar d(x_i,\rho))$ which is smaller than $(1- R_\rho m(x))^{-1}$. The proof of Lemma~\ref{lem:satisfied} shows that the measure $\bar z(\dd x) = (1- R_\rho m(x_i))^{-1} \rho(\dd x)$ is in the domain of convergence of the series with coefficients $|\varphi_{n+1}^\mathsf T(q,x_1,\ldots, x_n)|$. Thus the right-hand side of~\eqref{eq:almostthere} is absolutely convergent and  the equality holds true as an equality of convergent expressions.
	%Let us clarify what this means exactly. 
%	For $x\in \mathbb X$, define $E_0(x):=1$ and 
%	\[
%		E_n(x;y_1,\ldots,y_n):=\sum_{r\geq 1} \sum_{\{V_1,\ldots, V_r\}} \prod_{k=1}^r \bigl( - D_{\#V_k +1}(\vect y_{V_k}\bigr)\bigr)
%	\] 
%	where the sum is over set partitions of $[n]$. Then 
%	\[
%		\sum_{(L_1,L_2)} E_{\#L_1+1}(q;\vect x_{L_1}) 
% \varphi_{\#L_2+1}^\mathsf T(q,\vect x_{L_2}) \sum_{ (J_\ell)_{\ell \in L_2}} \prod_{\ell \in L_2} E_{\#L_2}(x_\ell; \vect x_{J_\ell}) 
% 	= 0
%	\] 
%	for all $n\geq 1$. In the previous sum $(L_1,L_2)$ run over disjoint possibly empty subsets of $[n]$ and $(J_\ell)_{\ell\in L_2}$ runs over disjoint possibly empty sets with union $\cup_{\ell \in L_2} J_\ell = [n]\setminus (L_1\cup L_2)$. 
\end{proof} 

\begin{proof}[Proof of Theorem~\ref{thm:correlations}]
	The absolute convergence of $\alpha(x_1,\ldots, x_s;z)$ follows from Theorem~\ref{thm:activity} and Lemma~\ref{lem:satisfied}. 	Standard combinatorial considerations in the spirit of ~\cite{stell1964}, \cite{leroux2004}, and \cite[Lemma 3.8]{jansen-kuna-tsagkaro2019} show that 
	\be \label{eq:first}
		\alpha(x_1,\ldots,x_s;z) = g(x_1,\ldots, x_s;\rho_1^z)\prod_{i=1}^s \e^{d(x_i;\rho_1^z)}
	\ee
	as an equality of formal power series in $z$. This reflects that every graph $G\in \mathcal C(W,B)$ with $W$, $B$ two finite non-empty sets, splits into a graph $G'\in \mathcal D(W,B')$, with $B'\subset B$, and a collection of connected graphs $(G''_i)_{i\in W\cup B'}$ such that each $G''_i$ contains the vertex $i$. Writing $V(G''_i)$ for the vertex set of $G''_i$, we have $\bigcup_{i\in W\cup B'} ( V(G''_i)\setminus\{i\}) = B\setminus B'$. The set $V(G''_i)\setminus \{i\}$ may be empty, it contains precisely those black vertices for which every path in $G$ to a white vertex different from $i$ passes through $i$. 
	
	In Eq.~\eqref{eq:first} we insert $z= \tilde \zeta[\rho]$ and obtain the formal power series equality
	\be \label{eq:second}
		\alpha(x_1,\ldots,x_s;\tilde \zeta[\rho])\prod_{i=1}^s \e^{-d(x_i;\rho)} = g(x_1,\ldots, x_s;\rho).
	\ee
	Our assumptions guarantee that the latter also holds true as an equality of convergent expressions. It follows that 
	\[
		\alpha(x_1,\ldots, x_s;z) z(\dd x_1)\cdots z(\dd x_s)= g(x_1,\ldots, x_s;\rho) \rho(\dd x_1)\cdots \rho(\dd x_s).
	\] 
	By Theorem~\ref{thm:inversion}, $\rho_1^z = \rho$ and the proof is complete. 
\end{proof} 

\appendix 

\section{Correlation functions of Gibbs measures}  \label{app:gibbs}

The classes of graphs $\mathcal C(W,B)$ and $\mathcal D(W,B)$ and their generating functions $\alpha(x_1,\ldots, x_s;z)$ and $g(x_1,\ldots, x_s;\rho)$ enter the activity and density expansions of correlation functions of grand-canonical Gibbs measures. This is well-known; for a derivation on the level  of formal power series, see e.g.~\cite{stell1964}. The convergence of density expansions of the correlation functions (in homogeneous systems) is addressed e.g.\ in \cite{lebowitz-penrose1964,bogoljubov-petrina-hacet1969}. In fact the correlation functions for homogeneous systems can be treated directly in the finite-volume canonical ensemble, see~\cite{pulvirenti-tsagkaro2015} for the two-point function and \cite{kuna-tsagkaro2018} for the truncated and direct correlation function and the Ornstein-Zernike equation.  

This appendix provides a statement adapted to our inhomogeneous setup (Proposition~\ref{prop:corr-rho}). In addition, we comment on the Kirkwood-Salsburg equation (Proposition~\ref{prop:last}). 
 Let $(\mathbb X,\mathcal X)$ be a Polish space, $\mathcal X_\mathrm b$ the collection of bounded measurable sets,  and $\Gamma$ the space of locally finite point configurations, i.e., 
\[
	\Gamma:= \bigl\{\eta\subset \mathbb X\mid \forall A\in \mathcal X_\mathrm b:\ \#(\eta \cap A) <\infty\bigr\}. 
\] 
The space $\Gamma$ is equipped with the $\sigma$-algebra $\mathscr F$ generated by the mappings $\gamma\mapsto \#(\eta\cap A)$, $A\in \mathcal X_\mathrm b$. 
The interaction of a point at $x\in \mathbb X$ with a configuration $\eta\in \Gamma$ is
\[
	H(x\mid \eta) := \begin{cases}
			\sum_{y\in \eta} v(x,y), &\quad \text{if the sum is absolutely convergent},\\
			\infty, &\quad \text{else}. 
		\end{cases} 
\] 
More generally, we set
\[
	H(x_1,\ldots, x_s\mid \eta) := \sum_{1\leq i <j \leq s} v(x_i,x_j) + \sum_{i=1}^s H(x_i\mid \eta).
\] 
In the following $z$ is a locally finite, diffuse measure on $(\mathbb X,\mathcal X)$, i.e., $z(\{x\}) = 0$ for all $x\in \mathbb X$ and $z(A)<\infty$ for all $A\in \mathcal X_\mathrm b$.

\begin{definition} 
A probability measure $\mathsf P$ on $(\Gamma,\mathscr F)$ is a Gibbs measure at activity $z$ for the pair potential $v$ if
\be \label{eq:gnz} \tag{$\mathsf{GNZ}$}
	\mathsf E\Bigl[ \sum_{x\in \eta} F(x,\eta)\Bigr] =\int z(\mathrm d x) \mathsf E\bigl[ \e^{- H(x\mid\eta)}F\bigl(x, \eta\setminus \{x\}\bigr)\bigr].
\ee
for all measurable functions $F:\mathbb X\times \Gamma\to \R_+$, with $\mathsf E[f]= \int_\Gamma f\dd \mathsf P$ the expected value with respect to $\mathsf P$. 
\end{definition} 

This definition uses the GNZ equation (named after Georgii, Nguyen, and Zessin) instead of the DLR equation (Dobrushin, Lanford, Ruelle) more familiar in mathematical physics; for the equivalence of the two characterizations, the reader is referred to the recent survey \cite{dereudre2019survey}. 

The intensity measure (one-particle density) $\rho_1$ of a Gibbs point process is absolutely continuous with respect to the activity, with
\[
	\rho_1(\dd x) = z(\dd x) \mathsf E[\e^{- H(x\mid \eta)}].
	%= z(\dd x) \E\Bigl[\prod_{y\in \eta}(1+ f(x_0,y))\Bigr]
\] 
%Notice $\E[\exp( - H(x\mid\eta)]>0$ for $\rho_1$-almost all $x\in \mathbb X$. 
More generally,  the factorial moment measure of order $s$ (also known as $s$-point correlation function or distribution function) satisfies
\be \label{eq:rhosdens}
	\rho_s\bigl( \mathrm d(x_1,\ldots,x_s)\bigr) = z(\mathrm d x_1)\cdots z(\mathrm d x_s) \, \mathsf E\Bigl[ \e^{- H(x_1,\ldots,x_s\mid \eta)} \Bigr]. 
\ee
%The factorial moment measure $\rho_s$ is absolutely continuous with respect to the product measure $\rho_1^s$ with Radon-Nikodym derivative 
%\[
%	h(x_1,\ldots, x_s) : = \frac{\E[\e^{-H(x_1,\ldots, x_s\mid \eta)}]}{\E[\e^{- H(x_1\mid\eta)}]\cdots \E[\e^{- H(x_s\mid \eta)}]}.
%\] 
%We write  
%\[
%	\rho_s\bigl( \mathrm d(x_1,\ldots,x_s)\bigr) = h(x_1,\ldots, x_s) \rho(\mathrm d x_1)\cdots \rho(\mathrm d x_s). 
%\] 
For simplicity we stick to the case where the grand-canonical partition function 
\[
	\Xi(z) := 1+ \sum_{n=1}^\infty \frac{1}{n!}\int_{\mathbb X^n} \e^{- \sum_{1\leq i < j \leq n} v(x_i,x_j)} z^n(\dd \vect x) 
\]
is finite, which amounts to working in finite volume. Under this condition the (grand-canonical) Gibbs measure is uniquely defined and given by 
\begin{multline} \label{eq:finitevol}
	\mathsf  P(\eta\in A) \\
	= \frac{1}{\Xi(z)} \Bigl( \1_A(\varnothing) + \sum_{n=1}^\infty \frac{1}{n!} \int_{\mathbb X^n} \1_A(\{x_1,\ldots, x_n\}) \e^{- \sum_{1\leq i < j \leq n} v(x_i,x_j)} z^n(\dd \vect x)\Bigr). 
\end{multline} 

\begin{prop}\label{prop:corr-z}
		Let $z$ be a locally finite, diffuse measure on $(\mathbb X,\mathcal X)$ with $\Xi(z)<\infty$. Suppose that $z$ satisfies the sufficient convergence condition from Theorem~\ref{thm:activity}. Then 
 the factorial moment measures of the uniquely defined Gibbs measure $\mathsf P$ are given by 
		\be \label{eq:factmom}
			\rho_s\bigl( \dd (x_1,\ldots, x_s)\bigr) = \alpha(x_1,\ldots,x_s;z) z(\dd x_1)\cdots z(\dd x_s)
		\ee
		with $\alpha(x_1,\ldots, x_s;z)$ defined in~\eqref{eq:alphadef}. 
\end{prop}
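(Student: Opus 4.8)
The plan is to reduce the claim to the single density identity
\[
\mathsf E\bigl[\e^{-H(x_1,\ldots,x_s\mid\eta)}\bigr] = \alpha(x_1,\ldots,x_s;z),
\]
after which \eqref{eq:factmom} follows at once from the formula~\eqref{eq:rhosdens} for the factorial moment measure of order $s$. To establish this identity I would first insert the explicit finite-volume Gibbs measure~\eqref{eq:finitevol} and use that, writing $W:=\{x_1,\ldots,x_s\}$,
\[
\e^{-H(x_1,\ldots,x_s\mid\{y_1,\ldots,y_n\})}\,\e^{-\sum_{1\le k<l\le n}v(y_k,y_l)} = \e^{-\sum v},
\]
the full Boltzmann factor over the combined vertex set $W\cup\{y_1,\ldots,y_n\}$. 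Expanding each Boltzmann factor as $\prod_{\text{pairs}}(1+f)=\sum_G w(G)$, this shows that $\Xi(z)\,\mathsf E[\e^{-H}]$ equals $\sum_{n\ge 0}\frac{1}{n!}\int_{\mathbb X^n}\sum_{G\in\mathcal G(W\cup\{y_1,\ldots,y_n\})}w(G)\,z^n(\dd\vect y)$.

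The combinatorial heart of the argument is a decomposition of each graph $G$ on $W\cup\{y_1,\ldots,y_n\}$ according to which black vertices are path-connected to a white vertex. Writing $B'$ for the set of $y_i$ joined to some white vertex by a path in $G$ and $B''$ for the complement, one checks that $G$ has no edge between $B''$ and $W\cup B'$ (such an edge would place a $B''$-vertex in $B'$), so that $G$ is a disjoint union of a graph in $\mathcal C(W,B')$ and an arbitrary graph on $B''$, with $w(G)$ multiplying accordingly. This yields, for each fixed $n$, the exact identity $\sum_{G}w(G)=\sum_{B'\sqcup B''=[n]}\varphi(W;B')\bigl(\sum_{G''\in\mathcal G(B'')}w(G'')\bigr)$, where the inner $B''$-sum is just the non-negative Boltzmann factor $\e^{-\sum_{\text{pairs in }B''}v}$. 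The product formula for exponential generating functions then factors the whole series into $\alpha(x_1,\ldots,x_s;z)\cdot\Xi(z)$, the first factor being the $\varphi$-series~\eqref{eq:alphadef} and the second the definition of $\Xi(z)$; cancelling $\Xi(z)$ delivers the desired identity.

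The main obstacle is purely analytic: everything above must be legitimate as manipulations of absolutely convergent, rather than merely formal, series. Here I would invoke the hypothesis that $z$ satisfies the sufficient condition of Theorem~\ref{thm:activity}, which gives $\bar\alpha(x_1,\ldots,x_s;z)\le\tilde m<\infty$, i.e. absolute convergence of the $\varphi$-series despite the weights $w(G)$ being signed; combined with $\Xi(z)<\infty$ this makes the doubly-indexed sum absolutely convergent, so that Fubini--Tonelli and the reordering into a product of generating functions are justified. I expect the remaining bookkeeping in the factorization (accounting for the $\binom{n}{\#B'}$ choices absorbed by the $1/n!$ factors, and the conventions for the $n=0$ and $B=\varnothing$ terms) to require some care but to be routine once the absolute convergence is in place.
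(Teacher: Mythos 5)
Your proposal is correct and follows essentially the same route as the paper's own proof: reduce via~\eqref{eq:rhosdens} to computing $\mathsf E[\e^{-H(x_1,\ldots,x_s\mid\eta)}]$ with the finite-volume formula~\eqref{eq:finitevol}, expand the Boltzmann factor over graphs, split each graph according to which black vertices are path-connected to the white set (yielding a $\mathcal C(W,B_1)$-graph times a free graph on the remainder), and resum to $\alpha(x_1,\ldots,x_s;z)\,\Xi(z)$, with Theorem~\ref{thm:activity} supplying the absolute convergence that legitimizes the rearrangements. The paper does exactly this, so there is nothing to add.
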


\begin{remark}
	The proposition uses only the absolute convergence of the series for $\alpha(\vect x;z)$, the precise sufficient convergence condition does not matter. 
\end{remark}

%\begin{remark} If we drop the condition $\Xi(z) <\infty$, the statement has to be modified as follows: Under mild additional conditions on the weight function $m$ in the bounds on $\rho$ there exists a Gibbs point process at activity $z$ with factorial moment measures~\eqref{eq:factmom}. The distribution of the relevant Gibbs point process is obtained by approximating $z$ with measures $z_n\nearrow z$ that have $\Xi(z_n) <\infty$, e.g., for stable translationally invariant pair potentials in $\mathbb X= \R^d$, taking $z_n (\dd q) = \1_{[-n,n]^d}(q) z(\dd q)$.  A similar remark applies to the following proposition. 
%The philosophy here is standard \blue{ref}, a full statement is beyond the scope of this article. 
%\end{remark}

\begin{prop} \label{prop:corr-rho}
	Let $\rho$ be a locally finite, diffuse measure on $(\mathbb X,\mathcal X)$. Suppose 	
	that $\rho$ satisfies the conditions from Theorem~\ref{thm:inversion} and that the measure $z$ defined by 
	\be \label{eq:zdef2} 
		 z(\dd q) = \rho(\dd q) \exp\Bigl( - \sum_{n=1}^\infty \frac{1}{n!}\int_{\mathbb X^n} D_{n+1}(q,x_1,\ldots, x_n) \rho^n(\dd \vect x)\Bigr)
	\ee
	satisfies $\Xi(z)<\infty$. Then the intensity measure of the uniquely defined Gibbs measure $\mathsf P $ is $\rho_1 = \rho$ and the correlation functions are 
	\[
		\rho_s\bigl(\dd (x_1,\ldots, x_s)\bigr)= g(x_1,\ldots,x_s;\rho) \rho(\dd x_1)\cdots \rho(\dd x_s)
	\]
	with $g(x_1,\ldots, x_s;\rho)$ defined in~\eqref{eq:hdef}. 
\end{prop}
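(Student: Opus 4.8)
The plan is to combine Proposition~\ref{prop:corr-z} with Theorems~\ref{thm:inversion} and~\ref{thm:correlations}, so that the proof reduces to matching up definitions rather than producing new analytic content. First I would check that the measure $z$ from~\eqref{eq:zdef2} is an admissible activity, i.e.\ locally finite and diffuse. Diffuseness is inherited from $\rho$ since $z$ is absolutely continuous with respect to $\rho$; local finiteness follows from the pointwise bound $z(\dd q)\leq (1- R_\rho m(q))^{-1}\rho(\dd q)$ supplied by Theorem~\ref{thm:main} (using $|d(q;\rho)|\leq \bar d(q;\rho)$), together with the standing assumption $\Xi(z)<\infty$. Hence the grand-canonical Gibbs measure $\mathsf P$ is uniquely defined by~\eqref{eq:finitevol}.

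Next, I would observe that~\eqref{eq:zdef2} is identical to~\eqref{eq:zdef}, so the hypotheses of Theorem~\ref{thm:inversion} are in force and Lemma~\ref{lem:satisfied} guarantees that $z$ satisfies the convergence condition~\eqref{eq:actisuff} of Theorem~\ref{thm:activity}. Proposition~\ref{prop:corr-z} then applies and yields, for every $s\in\N$,
\[
	\rho_s\bigl(\dd(x_1,\ldots,x_s)\bigr) = \alpha(x_1,\ldots,x_s;z)\, z(\dd x_1)\cdots z(\dd x_s),
\]
which is precisely the measure denoted $\rho_s^z$ in Section~\ref{sec:activity}. Thus the factorial moment measures of $\mathsf P$ coincide with the abstract measures $\rho_s^z$ studied there.

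For the one-particle density I would treat the case $s=1$ separately: since $\varphi(x;\varnothing)=1$ and $\varphi(x;x_1,\ldots,x_n)=\varphi_{n+1}^\mathsf T(x,x_1,\ldots,x_n)$ (a single white vertex forces the whole graph to be connected), the series $\alpha(x;z)$ equals the bracket in~\eqref{eq:rhoconn}, so $\rho_1=\rho_1^z$. Theorem~\ref{thm:inversion} gives $\rho_1^z=\rho$, hence $\rho_1=\rho$. For $s\geq 2$, Theorem~\ref{thm:correlations} states $\rho_s^z(\dd(x_1,\ldots,x_s)) = g(x_1,\ldots,x_s;\rho_1^z)\,\rho_1^z(\dd x_1)\cdots\rho_1^z(\dd x_s)$; substituting $\rho_1^z=\rho$ and recalling $\rho_s=\rho_s^z$ produces the claimed identity.

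The genuine analytic work—absolute convergence, the formal-power-series inversion $z\mapsto\rho_1^z$, and the combinatorial factorization of the $\varphi$-weights into $\psi$-weights plus rooted connected components—is already carried by Theorems~\ref{thm:inversion} and~\ref{thm:correlations}, so what remains here is essentially bookkeeping. Accordingly, the main obstacle I anticipate is purely one of careful identification: confirming that $z$ is an admissible activity and that the two pieces of notation, $\rho_s$ (the Gibbs correlation functions of Proposition~\ref{prop:corr-z}) and $\rho_s^z$ (the measures of Section~\ref{sec:activity}), refer to the same object. Once this is made explicit, the three cited results chain together immediately.
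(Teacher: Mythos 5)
Your proposal is correct and follows essentially the same route as the paper, whose entire proof is the one-line remark that the proposition is an immediate consequence of Proposition~\ref{prop:corr-z} together with Theorems~\ref{thm:inversion} and~\ref{thm:correlations}. Your additional bookkeeping---checking that $z$ is an admissible activity, identifying \eqref{eq:zdef2} with \eqref{eq:zdef}, noting $\varphi(x;x_1,\ldots,x_n)=\varphi_{n+1}^\mathsf T(x,x_1,\ldots,x_n)$ for the $s=1$ case, and matching $\rho_s$ with $\rho_s^z$---merely makes explicit what the paper leaves implicit.
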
 

\begin{proof}[Proof of Proposition~\ref{prop:corr-rho}]
	The proposition is an immediate consequence of Proposition~\ref{prop:corr-z} and Theorems~\ref{thm:inversion} and~\ref{thm:correlations}. 		
\end{proof} 

\begin{proof} [Proof of Proposition~\ref{prop:corr-z}]
  The statement is well-known on the level of formal power series~\cite[Section 4]{stell1964}, we present a self-contained proof for the reader's convenience. 
	In view of Eq.~\eqref{eq:rhosdens} we have to evaluate the expected value of $\exp( - H(\vect x\mid \eta))$, which we do with the help of~\eqref{eq:finitevol}. Let 
	\[
		H(x_1,\ldots, x_s) = \sum_{1\leq i < j \leq s} v(x_i-x_j).
	\]
	We want to evaluate 
	\begin{multline} \label{eq:multipinned}
		\mathsf E\bigl[\e^{- H(x_1,\ldots, x_s\mid \eta)}\bigr] \\
		 = \frac{1}{\Xi(z)}\Bigl( \e^{- H(x_1,\ldots, x_s)} 
		 	+ \sum_{n=1}^\infty \frac{1}{n!}\int_{\mathbb X^n} \e^{- H(x_1,\ldots, x_{s+n})} z(\dd x_{s+1})\cdots z(\dd x_{s+n})\Bigr).  
	\end{multline}
	The integrand is equal to 
	\[
		\sum_{G\in \mathcal G_{s+n}}\prod_{\{i,j\}\in E(G)} f(x_i,x_j). 
	\]  
	Set $W:= \{1,\ldots, s\}$ and $B:= \{s+1,\ldots,s+n\}$. Given $G\in \mathcal G_{s+n}$, let $B_1\subset B$ be the set of vertices that are connected to $W$ by a path in $G$, and $B_2:= B\setminus B_1$. Then there is no edge that links $B_2$ to $W\cup B_1$, and $G$ splits into two graphs $G_1$ and $G_2$ with respective vertex sets $W\cup B_1$ and $B_2$. These combinatorial considerations set up a one-to-one correspondence  between graphs $G\in \mathcal G_{s+n}$ and pairs $(G_1,G_2)$ such that 
	\begin{enumerate} [(i)]
		\item [(i)] $G_1$ is a graph in $\mathcal C(W,B_1)$ with $B_1\subset B$. 
		\item [(ii)] $G_2$ is a graph with vertex set $B_2=B\setminus B_1$.  
	\end{enumerate} 
	It follows that 
	\[ 
		\e^{- H(x_1,\ldots, x_{s+n})}= \sum_{(B_1,B_2)} \varphi(x_1,\ldots, x_s;(x_i)_{i\in B_1})\, \e^{- H((x_j)_{j\in B_2})}
	\]
	where the summation is over pairs $(B_1,B_2)$ of disjoint sets with $B_1\cup B_2= B=\{s+1,\ldots, s+n\}$. As a consequence the term in parentheses in~\eqref{eq:multipinned} is equal to 
	\begin{multline}
		\varphi(x_1,\ldots, x_s;\varnothing) + \sum_{n=1}^\infty \frac{1}{n!} \int_{\mathbb X^n} \sum_{I\subset [n]} \varphi(x_1,\ldots, x_s;\vect y_{I}) \e^{- H(\vect y_{[n]\setminus I})} z^n(\dd \vect y)\\
		= \alpha(x_1,\ldots, x_s;z) \Xi(z). 
	\end{multline}
	The absolute convergence of $\alpha(x_1,\ldots, x_s;z)$ is guaranteed by Theorem~\ref{thm:activity}. 
	The factor $\Xi(z)$ is cancelled by the factor $1/\Xi(z)$ in~\eqref{eq:multipinned} and we obtain 
	\[
		\mathsf E\bigl[\e^{- H(x_1,\ldots, x_s\mid \eta)} \bigr]
			= \alpha(x_1,\ldots, x_s;z). 
	\] 
	The proposition now follows from~\eqref{eq:rhosdens}. 
\end{proof} 

We conclude with two remarks related Kirkwood-Salsburg equation, valid for all Gibbs measures i.e. also in infinite volume ($\Xi(z)=\infty$). The first remark explains how the convergence condition in Theorem~\ref{thm:main} arises from the Kirkwood-Salsburg equation and can be understood without any reference to recurrence relations of weighted graphs. As is well-known, under some additional growth conditions on its correlation functions a probability measure is a Gibbs measure if and only if the correlation functions satisfy the Kirkwood-Salsburg equations \cite{ruelle1970superstable, kondratiev-kuna2003}. In our notation the latter read 
\be \label{eq:ksr}
	\rho_1(\dd x_1) = z(\dd x_1) \Bigl( 1+ \sum_{n=1}^\infty \frac{1}{n!}\int_{\mathbb X^n} \prod_{i=1}^n f(x_1,y_i) \rho_n\bigl( \dd(y_1,\ldots, y_n)\bigr)\Bigr)
\ee
and for $s\geq 2$, 
\begin{multline}\label{eq:ks} 
	\rho_s\bigl( \dd( x_1,\ldots, x_s)\bigr) 
		= \prod_{i=2}^s (1+f(x_1,x_i))z(\dd x_1)  \Biggl( \rho_{s-1}\bigl(\dd( x_2,\ldots, x_s)\bigr) \\
		+ \sum_{n=1}^\infty \frac{1}{n!}\int_{\mathbb X^n} \prod_{i=1}^n f(x_1,y_i) \rho_{s-1+n}\bigl( \dd(x_2,\ldots,x_s,y_1,\ldots, y_n)\bigr)\Biggr).
\end{multline} 
The last equation uses somewhat abusive notation, 
 some readers may prefer the equation
\begin{multline}\label{eq:ks-tested} 
	\int_{\mathbb X^s} F(x_1,\ldots,x_s) \rho_s\bigl( \dd( x_1,\ldots, x_s)\bigr) 
		=\prod_{i=2}^s (1+f(x_1,x_i))\Biggl( \int_{\mathbb X^s} F(\vect x) (z\otimes \rho_{s-1})(\dd \vect x)\\		
		+ \sum_{n=1}^\infty \frac{1}{n!}\int_{\mathbb X^{s+n}} F(x_1,\ldots, x_s) \prod_{i=1}^n f(x_1,x_{s+i}) (z\otimes \rho_{s-1+n})\bigl(\dd (x_1,\ldots, x_{s+n})\bigr)\Biggr)
\end{multline} 
valid for all measurable functions $F:\mathbb X^s\to \R_+$. Now, assuming that the term in parentheses on the right-hand side of Eq.~\eqref{eq:ksr} is non-zero for all $x_1\in\mathbb X$, we can invert the relation and express $z(\dd x_1)$ as a function of the correlation functions $\rho_s$. Plugging this relation into~\eqref{eq:ks} we obtain a set of integral equations for the correlation functions where the activity $z$ no longer appears. This translates into a set of integral equations for the Radon-Nikodym derivative of $\rho_s$ with respect to $\rho_1^{\otimes s}$. This set of integral equations is similar to the equation $\mathbf K_\rho m = m$ with $\mathbf K_\rho$ defined above Theorem~\ref{thm:main}, the only difference being that Mayer's $f$-function appears without absolute values and the geometric series for $(1-R_\rho m(x_1))^{-1}$ is replaced with $(1+ \sum_{n=1}^\infty \frac 1 {n!} \int_{\mathbb X^n} \prod_{i=1}^n f(x_1,y_i)\rho_n(\dd \vect y))^{-1}$.

Therefore the convergence condition $\mathbf K_\rho m \leq m$ is a natural counterpart of the Kirkwood-Salsburg equation, manipulated in such a way that the activity $z$ disappears. For homogeneous systems, such a set of equations can actually be derived directly from the canonical ensemble, see~\cite{bogoljubov-petrina-hacet1969}. 

The second remark concerns the additional condition~\eqref{eq:suppsuff} on $\rho$ needed in our proof of Proposition~\ref{prop:corr-rho}. As pointed out before Theorem~\ref{thm:inversion}, the additional condition is needed in order to ensure that not only do the density expansions converge, but in addition the activity defined as a function of $\rho$ is in the domain of convergence of the activity expansions. In self-explanatory notation, for homogeneous systems the condition ensures 
\[
	\rho \leq \sup_{z \leq R_\mathsf{May}} \rho(z)
\] 	
with $R_\mathsf{May}$ the radius of convergence of the Mayer series (activity expansion of the pressure). 
Clearly we would like to get rid of this condition. After all, as mentioned in the introduction, for non-negative interactions we expect that $\rho> \sup_{z\leq R_\mathsf{May}} \rho(z)$ and the density expansions should retain their physical relevance \emph{beyond} the domain of convergence of the activity expansions. Therefore a relevant observation is that the Kirkwood-Salsburg equations hold true without the additional assumption~\eqref{eq:suppsuff}.

\begin{prop}\label{prop:last}
	Under the conditions of Theorem~\ref{thm:main}, the family of measures $\rho_s$ \emph{defined} by $\rho_s(\dd(x_1,\ldots, x_s)) = g(x_1,\ldots, x_s;\rho) \rho(\dd x_1) \cdots\rho(\dd x_s)$ satisfies the Kirkwood-Salsburg equations at the activity $z$ given by~\eqref{eq:zdef2}.
\end{prop}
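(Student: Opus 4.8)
The plan is to reduce the Kirkwood--Salsburg equations to a single pointwise identity for the densities $g$ and then to read that identity off the recurrence of Proposition~\ref{prop:recursion}. Writing $\rho_s(\dd\vect x)=g(x_1,\ldots,x_s;\rho)\,\rho(\dd x_1)\cdots\rho(\dd x_s)$ and, as in~\eqref{eq:zdef2}, $z(\dd x_1)=\rho(\dd x_1)\e^{-d(x_1;\rho)}$ with
\[
	d(x_1;\rho):=\sum_{n=1}^\infty\frac1{n!}\int_{\mathbb X^n}D_{n+1}(x_1,y_1,\ldots,y_n)\,\rho^n(\dd \vect y),
\]
one checks that, after cancelling the common factor $\rho(\dd x_1)\cdots\rho(\dd x_s)$, equation~\eqref{eq:ks} for $s\geq2$ is equivalent to the pointwise identity
\[
	g(x_1,\ldots,x_s;\rho)=\prod_{i=2}^s\bigl(1+f(x_1,x_i)\bigr)\,\e^{-d(x_1;\rho)}\,\mathcal A(x_1;x_2,\ldots,x_s),
\]
where $\mathcal A$ is the first-piece generating function
\[
	\mathcal A(x_1;x_2,\ldots,x_s):=g(x_2,\ldots,x_s;\rho)+\sum_{n=1}^\infty\frac1{n!}\int_{\mathbb X^n}\prod_{i=1}^n f(x_1,y_i)\,g(x_2,\ldots,x_s,y_1,\ldots,y_n;\rho)\,\rho^n(\dd \vect y).
\]
For $s=1$ one has $g(x_1;\rho)=1$, so $\rho_1=\rho$, and~\eqref{eq:ksr} reduces to $\rho(\dd x_1)=z(\dd x_1)\e^{d(x_1;\rho)}$, which will follow from the same exponential-formula identity below together with the definition of $z$.

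To establish the displayed identity I would insert $g(x_1,\ldots,x_s;\rho)=\sum_{n\ge0}\frac1{n!}\int\psi(\{1,\ldots,s\};\{s+1,\ldots,s+n\})\rho^n$ into Proposition~\ref{prop:recursion} with $\iota=1$ and integrate over the black vertices. This is the exact, signed counterpart of the proof of Proposition~\ref{prop:fpi}. The factor $\prod_{i=2}^s(1+f(x_1,x_i))$ pulls out; summing the first piece $\psi(\{2,\ldots,s\}\cup L_1;J_1)$ over its black vertices reproduces exactly $\mathcal A$ by the definition of $g$; and distributing the integrated black vertices over the ordered free pieces $\psi(L_k;J_k)$, $k\ge2$, factorises, by the multiplicativity of exponential generating functions, into $\mathcal A\cdot\sum_{p\ge0}(-1)^p\widehat B(x_1)^p$, where
\[
	\widehat B(x_1):=\sum_{q\ge1}\frac1{q!}\int_{\mathbb X^q}\prod_{i=1}^q f(x_1,y_i)\,g(y_1,\ldots,y_q;\rho)\,\rho^q(\dd \vect y)
\]
is the generating function of a single free piece. (The only combination excluded from the free-piece sum, namely $L=\varnothing$, is precisely the separate term $\psi(\{2,\ldots,s\};\cdot)$ in Proposition~\ref{prop:recursion}, which is the $p=0$ contribution to $\mathcal A$.)

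The heart of the matter is to identify $\sum_{p\ge0}(-1)^p\widehat B^p$ with $\e^{-d(x_1;\rho)}$. The key is a block decomposition: a free piece, i.e. a $\psi$-graph rooted at $x_1$ through $f$-edges, splits into its connected components, each of which together with the $x_1$-edges to its white vertices is, by Proposition~\ref{prop:2conn}, exactly a $2$-connected graph through $x_1$; conversely $D_{n+1}(x_1,\cdot)$ is the generating function of precisely these $2$-connected blocks. Since the components share only the root $x_1$, the exponential formula gives $1+\widehat B(x_1)=\e^{d(x_1;\rho)}$, whence $\sum_{p\ge0}(-1)^p\widehat B^p=(1+\widehat B)^{-1}=\e^{-d(x_1;\rho)}$, closing the identity. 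The same relation $1+\widehat B=\e^{d}$ shows that the bracket in~\eqref{eq:ksr} equals $\e^{d(x_1;\rho)}$, giving the $s=1$ case.

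For the rigorous status I would proceed as in the proofs of Theorems~\ref{thm:inversion} and~\ref{thm:correlations}: first establish all of the above as an identity of formal power series in $\rho$, where every rearrangement — the factorisation, the exponential formula, and the inversion of $1+\widehat B$ — is unconditionally valid, and then upgrade to an identity of convergent expressions. Theorem~\ref{thm:main} supplies exactly the required absolute convergence: $\bar g\le m<\infty$ controls $g$ and $\mathcal A$, while $\bar d(x_1;\rho)<\infty$ controls $d(x_1;\rho)$ and hence $\e^{-d(x_1;\rho)}$, whose absolute-value series is dominated by $\e^{\bar d(x_1;\rho)}$. I emphasise that the extra condition~\eqref{eq:suppsuff} is deliberately not invoked. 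The main obstacle is precisely this final passage: in the Groeneveld and Nguyen--Fern{\'a}ndez regimes $\widehat B(x_1)$ need not be small, so $\sum_{p\ge0}(-1)^p\widehat B^p$ cannot be summed naively in absolute value; it is the detour through the termwise-valid identity $1+\widehat B=\e^{d}$, together with the independent absolute convergence of $g$ and of $\e^{-d}$, that legitimises replacing the divergent geometric series by the finite quantity $\e^{-d(x_1;\rho)}$.
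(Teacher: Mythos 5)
Your proposal is correct and follows essentially the route the paper intends for Proposition~\ref{prop:last}: the signed analogue of the proof of Proposition~\ref{prop:fpi} obtained by integrating the recurrence of Proposition~\ref{prop:recursion}, combined with the block decomposition underlying Proposition~\ref{prop:2conn} to get $1+\widehat B(x_1)=\e^{d(x_1;\rho)}$, the whole identity being established first formally and then upgraded to convergent expressions under the hypotheses of Theorem~\ref{thm:main}. One small correction: your final ``obstacle'' is not actually present, since under those hypotheses $|\widehat B(x_1)|\le (R_\rho \bar g)(x_1)\le (R_\rho m)(x_1)<1$ (using the conclusion $\bar g\le m$ and monotonicity of $R_\rho$), so the geometric series does converge absolutely --- though your detour through the identity $1+\widehat B=\e^{d}$ is valid and even more robust.
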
 

The proof is based on the recurrence relation from Propositions~\ref{prop:2conn} and~\ref{prop:recursion}, it is similar to the proof of Proposition~\ref{prop:fpi}, the details are left to the reader.

Proposition~\ref{prop:last} still leaves us with the task of proving that there exists a probability measure $\mathsf P$ on $(\Gamma, \mathscr F)$ with factorial moment measures $\rho_s, s\geq 1$ (the Kirkwood-Salsburg equation would then imply that it is a Gibbs measure). For that one has to prove so-called \emph{Lenard positivity} following~\cite{lenard1975} or try to adapt the strategy from~\cite{nehring-poghosyan-zessin2013}. Another option is to impose slightly more stringent conditions on $\rho$ so that $z$ and $\rho_s$ fall into a domain where the Kirkwood-Salsburg equation has a unique solution.  We leave this as an open problem for future work. 

\medskip 

\noindent \emph{Acknowledgments.} I thank Tobias Kuna and Dimitrios Tsa\-gkarogiannis for many helpful discussions and literature guidance, and Roberto Fern{\'a}ndez and Nguyen Tong Xuan for comments on the work \cite{nguyen-fernandez2020}. 

%\bibliographystyle{amsalpha}
%
%\bibliography{groeneveldbiblio}

%%%%%%%%%%%%%%%%%%%%%%%%%%%%%%%%%
%%%%%%%%%%%%%%%%%%%%%%%%%%%%%%%%%
\providecommand{\bysame}{\leavevmode\hbox to3em{\hrulefill}\thinspace}
\providecommand{\MR}{\relax\ifhmode\unskip\space\fi MR }
% \MRhref is called by the amsart/book/proc definition of \MR.
\providecommand{\MRhref}[2]{%
  \href{http://www.ams.org/mathscinet-getitem?mr=#1}{#2}
}
\providecommand{\href}[2]{#2}

\end{document}